\declaretheorem[name=Lemma]{lemma}
\definecolor{Darkblue}{rgb}{0,0,0.4}
\definecolor{Brown}{cmyk}{0,0.61,1.,0.60}
\definecolor{Purple}{cmyk}{0.45,0.86,0,0}
\newtheorem{theorem}{Theorem}
\newtheorem{corollary}{Corollary}
\newtheorem{claim}{Claim}
\newtheorem{fact}{Fact}
\newcommand{\namedref}[2]{\hyperref[#2]{#1~\ref*{#2}}}
\newcommand{\sectionref}[1]{\namedref{Section}{#1}}
\newcommand{\appendixref}[1]{\namedref{Appendix}{#1}}
\newcommand{\theoremref}[1]{\namedref{Theorem}{#1}}
\newcommand{\figureref}[1]{\namedref{Figure}{#1}}
\newcommand{\claimref}[1]{\namedref{Claim}{#1}}
\newcommand{\lemmaref}[1]{\namedref{Lemma}{#1}}
\newcommand{\corollaryref}[1]{\namedref{Corollary}{#1}}
\newcommand{\algref}[1]{\namedref{Algorithm}{#1}}
\newcommand{\R}{\mathbb{R}}
\newcommand{\Exp}{\mathsf{Exp}}
\newcommand{\eps}{\epsilon}
\newcommand{\la}{~\leftarrow~}
\newcommand{\cCE}{c_{\text{\tiny CE}}}
\newcommand{\cint}{c_{\text{\tiny int}}}
\newcommand{\ECE}{\mathcal{E}^{\text{\tiny CE}}}
\newcommand{\XCE}{\mathcal{X}^{\text{\tiny CE}}}
\newcommand{\ECUB}{\mathcal{E}^{\text{\tiny CUB}}}
\newcommand{\EfBig}{\mathcal{E}^{\text{\tiny fBig}}}
\def\inline#1:{\par\vskip 7pt\noindent{\bf #1:}\hskip 10pt}
\def\inline#1:{\par\vskip 7pt\noindent{\bf #1:}\hskip 10pt}
\def\blackslug{\hbox{\hskip 1pt \vrule width 4pt height 8pt
		depth 1.5pt \hskip 1pt}}
\def\QED{\quad\blackslug\lower 8.5pt\null\par}
\newcommand{\initOneLiners}{%
	\setlength{\itemsep}{0pt}
	\setlength{\parsep }{0pt}
	\setlength{\topsep }{0pt}
}
\newcommand{\alert}[1]{\textbf{\color{red}
		[[[#1]]]}\marginpar{\textbf{\color{red}**}}\typeout{ALERT:
		\the\inputlineno: #1}}
\definecolor{purple}{rgb}{0.294, 0, 0.71}
\providecommand{\algorithmname}{Algorithm}
\title{Steiner Point Removal with Distortion $O(\log k)$}
\author{Arnold Filtser\\Ben Gurion University of the Negev\\
	Email: \texttt{arnoldf@cs.bgu.ac.il}}
\date{\today}
\begin{document}
	\maketitle
	\thispagestyle{empty}
	\nonumber
	\begin{abstract}
			In the Steiner point removal (SPR) problem, we are given a weighted graph $G=(V,E)$ and a set of terminals $K\subset V$ of size $k$.
			The objective is to find a minor $M$ of $G$ with only the terminals as its vertex set, such that the distance between the terminals will be preserved up to a small multiplicative distortion.
			Kamma, Krauthgamer and Nguyen \cite{KKN15} used a ball-growing algorithm with exponential distributions to show that
			the distortion is at most $O(\log^5 k)$.
			Cheung \cite{Cheung17} improved the analysis of the same algorithm, bounding the distortion by $O(\log^2 k)$.
			We improve the analysis of this ball-growing algorithm even further, bounding the distortion by $O(\log k)$.
	\end{abstract}

\newpage
\pagenumbering{arabic}
\section{Introduction}
In graph compression problems the input is usually a massive graph. The objective is to compress the graph into a smaller graph, while preserving certain  properties of the original graph such as distances or cut values.
Compression allows us to obtain faster algorithms, while reducing the storage space. In the era of massive data, the benefits are obvious.
Examples of such structures are graph spanners \cite{PS89}, distance oracles \cite{TZ05}, cut sparsifiers \cite{BK96}, spectral sparsifiers \cite{BSS12}, vertex sparsifiers \cite{Moitra09} and more.

In this paper we study the \emph{Steiner point removal} (SPR) problem. Here we are given an undirected graph $G=(V,E)$ with positive weight function $w:E\rightarrow\mathbb{R}_+$, and a subset of terminals $K\subseteq V$ of size $k$.
The goal is to construct a new graph $M=(K,E')$ with positive weight function $w'$, with the terminals as its vertex set, such that: (1) $M$ is a graph minor of $G$, and (2) the distance between every pair of terminals $t,t'$ is distorted by at most a multiplicative factor of $\alpha$, formally
$$\forall t,t'\in K,~~d_G(t,t')\le d_{M}(t,t')\le \alpha \cdot d_G(t,t')~.$$
Property (1) expresses preservation of the topological structure of the original graph. For example if $G$ was planar, so will $M$ be. Whereas property (2) expresses preservation of the geometric structure of the original graph, that is, distances between terminals.
The question is: what is the minimal $\alpha$ (which may depend on $k$) such that every graph with a terminal set of size $k$ will have a solution to the SPR problem with distortion $\alpha$.

The first one to study a problem of this flavor was Gupta \cite{G01}, who showed that given a weighted tree $T$ with a subset of terminals $K$, there is a tree $T'$ with $K$ as its vertex set, that preserves all the distances between terminals up to a multiplicative factor of $8$.
Chan, Xia, Konjevod, and Richa \cite{CXKR06}, observed that the tree $T'$ of Gupta is in fact a minor of the original tree $T$. They showed that $8$ is the best possible distortion, and formulated the problem for general graphs.
This lower bound of $8$ is achieved on the complete unweighted binary tree, and is the best known lower bound for the general SPR problem. 

Basu and Gupta \cite{BG08} showed that on outerplanar graphs, the SPR problem can be solved with distortion $O(1)$.

Kamma, Krauthgamer and Nguyen were the first to bound the distortion for general graphs.
They suggested a natural ball growing algorithm.   
Their first analysis provide  $O(\log^6 k)$ distortion (conference version \cite{KKN14}), which they later improved to $O(\log^5 k)$  (journal version \cite{KKN15}).
Very recently, Cheung \cite{Cheung17} improved the analysis of the same algorithm further, providing an $O(\log^{2}k)$ upper bound on the distortion.

The main contribution of this paper is an even further improvement upon the analysis of the same algorithm, providing an $O(\log k)$ upper bound for the SPR problem on general graphs.
Closing the gap between the lower bound of $8$ to the upper bound of $O(\log k)$ remains an intriguing open question.

\subsection{Related Work}
Englert et. al. \cite{EGKRTT14} showed that every graph $G$, admits a distribution $\mathcal{D}$ over terminal minors with expected distortion $O(\log k)$. Formally, for all $t_{i},t_{j}\in K$, it holds that $1\le\frac{\mathbb{E}_{M\sim\mathcal{D}}\left[d_{M}(t_{i},t_{j})\right]}{d_{G}(t_{i},t_{j})}\le O\left(\log k\right)$. 
Thus, \theoremref{thm:mainSPR} can be seen as improvement upon \cite{EGKRTT14}, where we replace distribution with a single minor.
Englert et. al. showed better results for  $\beta$-decomposable graphs, in particular showing that graphs excluding a fixed minor, admitting a distribution with $O(1)$ expected distortion.

Krauthgamer, Nguyen and Zondiner \cite{KNZ14} showed that if we allowing the minor $M$ to contain at most ${k\choose 2}^2$ Steiner vertices in addition to the terminals, then distortion $1$ can be achieved. They further showed that for graphs with constant treewidth, $O(k^2)$ Steiner points will suffice for distortion $1$.
Cheung, Gramoz and Henzinger \cite{CGH16} showed that allowing $O(k^{2+\frac2t})$ Steiner vertices, one can achieve distortion $2t-1$ (in particular distortion $O(\log k)$ with $O(k^2)$ Steiners). For planar graphs, Cheung et. al. achieved $1+\eps$ distortion with $\tilde{O}((\frac k\epsilon)^2)$ Steiner points.

There is a long line of work focusing on preserving the cut/flow structure among the terminals, by a graph minor. See 
\cite{Moitra09,LM10,CLLM10,MM10,EGKRTT14,Chuzhoy12,KR13,AGK14,GHP17,KR17}.

Finally, there were works studying metric embeddings and metric data structures concerning with preserving distances among terminals, or from terminals to other vertices, out of the context of minors. See \cite{CE05,RTZ05,GNR10,KV13,EFN15S,EFN15T,BFN16}.

\subsection{Technical Ideas}
We use the ball growing algorithm presented in \cite{KKN15} (also used by \cite{Cheung17}), with adjusted parameters. The algorithm work in rounds. In each round, by turn, each terminal $t_j$ increases the radius $R_j$ of its ball-cluster $V_j$ in attempt to add more vertices to its cluster $V_j$. Once a vertex joins some cluster, it will remain there. 
In round $\ell$, the radii are (independently) sampled according to exponential distribution with mean $D\cdot r^\ell$, where  $r=1+\frac{O(1)}{\ln k}$ and $D=r-1$. In each consecutive round, the mean of the distribution is multiplied by $r$. This extremely slow growth rate  allows us to control (w.h.p) the round in which each vertex will be covered (that is, join some cluster).
Specifically, for vertex $v$ whose closest terminal is at distance $D(v)$, w.h.p. $v$ is covered somewhere between round $\ensuremath{\log_{r}\left(\Omega(D(v))\right)}$ to round
$\ensuremath{\log_{r}\left(O(D(v)\right)}$. In particular, $v$ will be covered by terminal $t$ at distance at most $O(D(v))$ from $v$. Furthermore, every vertex $v'$ that is covered simultaneously with $v$ will be also at distance at most $O(D(v))$ from $t$.

In the end of the algorithm, when all the vertices are covered, we contract each cluster into a single vertex to get a minor graph $M$ on the terminals. The weight in $M$ of the edge $\{t_i,t_j\}$ (if exist) is simply set to $d_G(t_i,t_j)$.
In order to bound the distance in the minor graph between two terminals $t,t'$, we partition the shortest path $P_{t,t'}$ from $t$ to $t'$ into a set of intervals $\mathcal{Q}$. The length $|Q|$ of each interval $Q\in\mathcal{Q}$ will be $\Theta(\frac{D(Q)}{\ln k})$, where $D(Q)$ is the distance from $Q$ to its closest terminal. In particular, $Q$ will have the property that if some vertex $v\in Q$ is covered by $t$ at round $\ell$, then with probability at least $0.8$, all of $Q$ is covered by $t$ (at round $\ell$).

We can show that the expected number of terminals covering the vertices of $Q$ is constant.
In fact, Cheung \cite{Cheung17} argued that w.h.p every interval $Q$ is covered by at most $O(\log k)$ different terminals. This is the reason he pays additional $\log k$ factor on the distortion. We will use a subtler argument in order to spare this $\log k$ factor.

We will analyze the covering of all the intervals simultaneously.  
Consider round $\ell$, where terminal $t_j$ grows its cluster. Note that $t_j$ might cover vertices from different intervals. Let $Q_j^\ell$ be the interval containing the closest vertex to $t_j$, among the vertices of $P_{t,t'}$ that were covered by $t_j$ at round $\ell$.
The vertices covered by $t_j$ at round $\ell$ will create a detour $\mathcal{D}_{j}^{\ell}$, which will be charged upon $Q_{j}^{\ell}$. 
The sum of the lengths of all the detours created during the algorithm can be used to bound $d_M(t,t')$. The length of each $\mathcal{D}_{j}^{\ell}$ equals $O(\log k)\cdot Q_{j}^{\ell}$.

In each step at most one interval will be charged. All the covered vertices not in $Q_{j}^{\ell}$ will be covered free of charge.
We define a cost function $f$ which is defined by a linear combinations of all the charges upon all the intervals. Essentially $f$ is proportional to the length of all the created detours, and thus can be used to bound $d_M(t,t')$.
The next step is to use a concentration bounds to show that while some intervals might be charged for large number of detours, on average the cost function will not exceed the expectation by much. However, as the charges upon different intervals are strongly dependent, this requires a subtle argument.

\section{Preliminaries}
\appendixref{appendix:key} contains a summary of all the definitions and notations we use. The reader is encouraged to refer to this index  while reading. 

We consider undirected graphs $G=(V,E)$ with positive edge weights
$w: E \to \R_{\geq 0}$. Let $d_{G}$ denote the shortest path metric in
$G$. Let $B_{G}(v,r)=\{u\in V\mid d_{G}(v,u)\le r\}$ be the ball around $v$ in $G$ with radius $r$.
For a subset of vertices $A\subseteq V$, let $G[A]$ denote the \emph{induced graph} on $A$. 
Fix $K=\{t_1,\dots,t_k\}\subseteq V$ to be a set of \emph{terminals}. For a vertex $v$, $D(v)=\min_{t\in K}d_{G}(v,t)$ is the distance from $v$ to the closest terminal. 

\begin{figure}[]
	\centering{\includegraphics[scale=0.85]{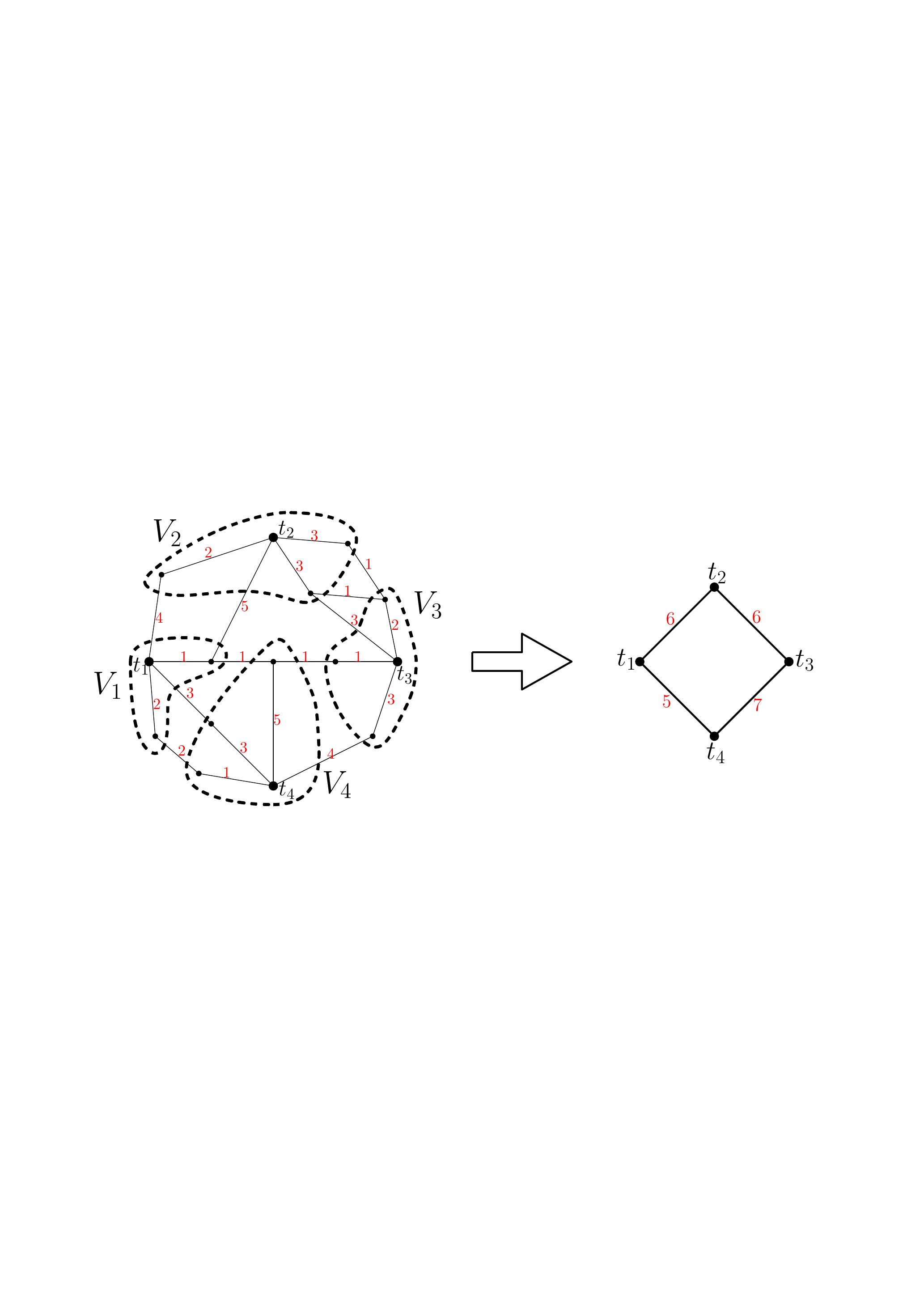}} 
	\caption{\label{fig:contraction}\small \it 
		The leftmost side of the figure contains a weighted graph $G=(V,E)$, with weights specified in red, and four terminals $\{t_1,t_2,t_3,t_4\}$.
		The dashed black curves represent a terminal partition of the vertex set $V$ into the subsets $V_1,V_2,V_3,V_4$.
		The right side of the figure represent the minor $M$ induced by the terminal partition. The distortion is realized between $t_1$ and $t_3$, and is $\frac{d_M(t_1,t_3)}{d_G(t_1,t_3)}=\frac{12}4=3$.
	}
\end{figure}

A graph $H$ is a \emph{minor} of a graph $G$ if we can obtain $H$ from
$G$ by edge deletions/contractions, and vertex deletions. A partition $\{V_1,\dots,V_k\}$ of $V$ is called a \emph{terminal partition} (w.r.t $K$) if for every $1\le i\le k$, $t_i\in V_i$, and the induced graph $G[V_i]$ is connected. See \figureref{fig:contraction} for illustration.
The \emph{induced minor} by terminal partition  $\{V_1,\dots,V_k\}$, is a minor $M$, where each set $V_i$ is contracted into a single vertex called (abusing  notation)  $t_i$. 
Note that there is an edge in $M$ from  $t_i$ to $t_j$ iff there are vertices $v_i\in V_i$ and $v_j\in V_j$ such that $\{v_i,v_j\}\in E$. 
We determine the weight of the edge $\{t_i,t_j\}\in E(M)$ to be $d_G(t_i,t_j)$. Note that by the triangle inequality, for every pair of (not necessarily neighboring) terminals $t_i,t_j$, it holds that $d_M(t_i,t_j)\ge d_G(t_i,t_j)$.
The \emph{distortion} of the induced minor is  $\max_{i,j}\frac{d_M(t_i,t_j)}{d_G(t_i,t_j)}$.

\subsection{Exponential Distribution}

$\Exp(\lambda)$ denotes the \emph{exponential distribution} with mean
$\lambda$ and density function $f(x)=\frac{1}{\lambda}e^{-\frac{x}{\lambda}}$ for $x\ge0$. $X\sim\Exp(\lambda)$ denotes the the random variable $X$ distributed according to $\Exp(\lambda)$.
By $a+c\cdot \Exp(\lambda)$ we denote a distribution where we sample $X\sim\Exp(\lambda)$, and return $a+c\cdot X$.
A useful property of exponential distribution is \emph{memoryless}: let $X\sim\Exp(\lambda)$, for every $a,b\ge 0$, $\Pr[X\ge a+b\mid X\ge a]=\Pr[X\ge b]$. In other words, given that $X\ge a$, it holds that $X\sim a+\Exp(\lambda)$. 
Another useful property of exponential distribution is \emph{closeness under scaling}, that is $c\cdot\Exp(\lambda)$ is equal to $\Exp(c\lambda)$.
We will use the following concentration bounds, the proof of which appears in \appendixref{app:ConProof}
\begin{restatable}{lemma}{LemmaExpCon}
	\label{lem:ExpConcentration}
	Suppose $X_{1},\dots,X_{n}$'s are independent random
	variables, where each $X_{i}$ is distributed according to $\Exp(\lambda_{i})$.
	Let $X=\sum_{i}X_{i}$ and $\lambda_{M}=\max_i\lambda_{i}$. Set $\mu=\mathbb{E}\left[X\right]=\sum_{i}\lambda_{i}$.
	\begin{align*}
	\text{For }a\ge2\mu\text{, }~~~~~~ & \Pr\left[X\ge a\right]\le\exp\left(-\frac{1}{2\lambda_{M}}\left(a-2\mu\right)\right)~.\\
	\text{For }a\le\frac{\mu}{2}\text{, }~~~~~~ & \Pr\left[X\le a\right]\le\exp\left(-\frac{1}{\lambda_{M}}\left(\frac{\mu}{2}-a\right)\right)~.
	\end{align*}
\end{restatable}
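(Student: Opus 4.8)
The plan is to run the standard Chernoff/Bernstein moment-generating-function argument, with the free parameter tuned so the constants land exactly on the stated bounds. Recall that for $X_i\sim\Exp(\lambda_i)$ one has $\E[e^{tX_i}]=(1-\lambda_i t)^{-1}$ for $t<1/\lambda_i$, hence by independence $\E[e^{tX}]=\prod_i(1-\lambda_i t)^{-1}$, and likewise $\E[e^{-tX}]=\prod_i(1+\lambda_i t)^{-1}$ for every $t\ge 0$. I would tidy these products using two elementary scalar inequalities: $(1-x)^{-1}\le e^{2x}$ for $0\le x\le\tfrac12$, which follows from $-\ln(1-x)\le\tfrac{x}{1-x}\le 2x$ on that range; and $(1+x)^{-1}\le e^{-x+x^2/2}$ for all $x\ge 0$, which follows from $\ln(1+x)\ge x-x^2/2$. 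Each is checked by noting the relevant difference of functions vanishes at $0$ and has nonnegative derivative.

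For the upper tail I would set $t=\tfrac{1}{2\lambda_M}$, so that $\lambda_i t\le\tfrac12$ for all $i$ (in particular $t<1/\lambda_i$, keeping every MGF factor finite and positive). Markov's inequality applied to $e^{tX}$ then gives, using $\sum_i\lambda_i=\mu$,
\[
\Pr[X\ge a]\le e^{-ta}\prod_i(1-\lambda_i t)^{-1}\le e^{-ta}\prod_i e^{2\lambda_i t}=e^{-t(a-2\mu)}=\exp\!\left(-\tfrac{1}{2\lambda_M}(a-2\mu)\right),
\]
which is the first claim; the hypothesis $a\ge 2\mu$ only serves to make this non-vacuous.

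For the lower tail I would apply Markov to $e^{-tX}$ with $t=\tfrac{1}{\lambda_M}$. Using $(1+x)^{-1}\le e^{-x+x^2/2}$ together with the key estimate $\sum_i\lambda_i^2\le\lambda_M\sum_i\lambda_i=\lambda_M\mu$,
\[
\Pr[X\le a]\le e^{ta}\prod_i(1+\lambda_i t)^{-1}\le e^{ta}\exp\!\left(-t\mu+\tfrac{t^2}{2}\lambda_M\mu\right)=\exp\!\left(ta-t\mu+\tfrac{t^2}{2}\lambda_M\mu\right),
\]
and substituting $t=\tfrac1{\lambda_M}$ collapses the exponent to $\tfrac{a}{\lambda_M}-\tfrac{\mu}{\lambda_M}+\tfrac{\mu}{2\lambda_M}=-\tfrac{1}{\lambda_M}\!\left(\tfrac{\mu}{2}-a\right)$, giving the second claim (again $a\le\mu/2$ only ensures non-vacuity).

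There is no deep obstacle here: the entire content is choosing the evaluation points $t=1/(2\lambda_M)$ and $t=1/\lambda_M$ so the exponents match the target expressions, plus the observation $\sum_i\lambda_i^2\le\lambda_M\mu$ that linearizes the second-order term. The one thing to watch is the admissibility constraint $t<1/\lambda_i$ needed for the upper-tail MGF to be finite, which is precisely why one uses $\tfrac{1}{2\lambda_M}$ rather than $\tfrac{1}{\lambda_M}$ there; the lower-tail estimate has no such restriction, since $e^{-tX}$ is integrable for every $t\ge 0$.
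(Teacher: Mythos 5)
Your proposal is correct and follows essentially the same route as the paper: Markov's inequality applied to $e^{tX}$ with $t=\frac{1}{2\lambda_M}$ for the upper tail and to $e^{-tX}$ with $t=\frac{1}{\lambda_M}$ for the lower tail, bounding the exponential moment generating functions by elementary scalar inequalities. The only (cosmetic) difference is in the lower tail, where the paper bounds each factor via $\frac{1}{1+x}\le 1-\frac{x}{2}\le e^{-x/2}$ for $0<x\le 1$, while you use $\ln(1+x)\ge x-\frac{x^2}{2}$ together with $\sum_i\lambda_i^2\le\lambda_M\mu$; both yield the same exponent.
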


\section{Algorithm}
We will assume that $\min_{v}D(v)=1$, as we can scale all the weights by a constant (and rescale appropriately the output). In addition, we will assume that the number of terminals $k$ is larger than a big enough constant, as otherwise the algorithm of \cite{KKN15} is asymptotically optimal.

Before executing our algorithm, we will make some preprocessing to the input graph $G$.
Our first step will be to use the algorithm of Krauthgamer, Nguyen and Zondiner \cite{KNZ14}
to obtain a minor of the input graph such that all terminal distances are preserved exactly, while the minor contains at most $2\cdot {k\choose 2}^2<\frac{k^4}2$ steiner vertices.
Let $P_{t,t'}$ be an arbitrary shortest path from $t$ to $t'$. Our next preprocessing step will be to ensure that every edge $e$ on  $P_{t,t'}$ has weight at most $\frac{c_{w}}{\ln k}\cdot d_{G}(t,t')$, where $c_w=\frac{1}{2400}$.
This can be achieved by subdividing larger edges by adding additional
vertices of degree two in the middle of large edges. 
This modification will require at most $\frac{\ln k}{c_{w}}$ vertices
per path, an thus a total of $\frac{k^2\ln k}{c_{w}}<\frac{k^{4}}{3}$ additional vertices. Thus, after this modification, the graph will contain at most $k^4$ vertices.
As we added only Steiner vertices of degree $2$, every induced minor by terminal partition of the new graph, will be a minor of the original graph as well.
From now on, we will abuse notation and let $G$ be the resulting graph (after both modifications) as if it were the original one.

After we finish with the preprocessing, we are ready to execute \algref{alg:mainSPR}, which is the same as the algorithm used by \cite{KKN15} (and \cite{Cheung17}), with adjusted parameters.
Each terminal $t_j$, will be associated with a radius $R_j$ and cluster $V_j\subset V$. 
During the algorithm we will iteratively grow clusters $V_1,\dots,V_k$ around the terminals. Once some vertex $v$ joins some cluster $V_j$, it will stay there. When all the vertices are clustered, the algorithm terminates. 
Initially the cluster $V_j$ contains only the terminal $t_j$, while $R_j$ equals $0$.
The algorithm will have rounds, where each round consist of $k$ steps. In step $j$ of round $\ell$, we sample a number $q_j^\ell$ according to distribution $\Exp(D\cdot r^\ell)$ (note that the mean of the distribution grows by a factor of $r$ in each round). The radius $R_j$ grows by  $q_j^\ell$. We consider the graph induced by the unclustered vertices $V_\perp$ union $V_j$. Every unclustered vertex of distance at most $R_j$ from $t_j$ in $G[V_\perp\cup V_j]$ joins $V_j$.

\begin{algorithm}[!ht]
	\caption{$M=\texttt{Steiner-point-removal}(G=(V,E),w,K=\{t_1,\dots,t_k\})$}\label{alg:mainSPR}
	\begin{algorithmic}[1]
		\STATE Set $r\la1+\delta / \ln k$, where $\delta = \nicefrac{1}{20}$.
		\STATE Set $D\la \frac \delta{\ln k}$.
		\STATE For each $j\in [k]$, set $V_j\la\{t_j\}$, and set $R_j\la~ 0$.
		\STATE Set $V_\perp~\la~V\setminus\left(\cup_{j=1}^k V_j\right)$.
		\STATE Set $\ell ~\la~ 0$.
		
		\WHILE{$\left(\cup_{j=1}^k V_j\right) ~\neq~ V$}
			\FOR {$j$ from $1$ to $k$}
				\STATE Choose independently at random $q^\ell_j$ distributed according to $\Exp(D\cdot r^\ell)$.
				\STATE Set $R_j\la R_j+q^\ell_j$.
				\STATE Set $V_j\la B_{G[V_\perp \cup V_j]}(t_j,R_j)$.	.\hfill\emph{//~This is the same as $V_j\la V_j\cup B_{G[V_\perp \cup V_j]}(t_j,R_j)$.}					
				\STATE Set $V_\perp\la V\setminus\left(\cup_{j=1}^k V_j\right)$.
			\ENDFOR
			\STATE $\ell\la\ell+1$.
		\ENDWHILE
		\RETURN the terminal-centered minor $M$ of $G$ induced by $V_1,\ldots,V_k$.
	\end{algorithmic}	
\end{algorithm}

\begin{theorem}\label{thm:mainSPR}
	With probability $1-O\left(\frac{1}{k}\right)$, in the minor graph $M$ returned by \algref{alg:mainSPR}, it holds that for every two terminals $t,t'$, $d_{M}(t,t')\le O\left(\log k\right)\cdot d_{G}(t,t')$.
\end{theorem}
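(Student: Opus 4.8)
The plan is to fix an arbitrary pair of terminals $t,t'$, upper bound $d_M(t,t')$ by the weight of a walk in $M$ obtained from the (preprocessed) shortest path $P_{t,t'}$ by short-cutting through the terminals that cluster its vertices, and then control the total length of the resulting detours by a concentration argument over an interval partition of $P_{t,t'}$. Note that $d_M(t,t')\ge d_G(t,t')$ always, so only the upper bound is at stake.

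First I would establish a single global event $\mathcal{A}$, depending only on the algorithm's randomness and holding with probability $1-O(1/k)$, on which (i) for every $j$ and every round $\ell$ the radius $R_j$ after round $\ell$ is within a constant factor of its mean $r^{\ell+1}-1$, and (ii) every vertex $v$ (there are at most $k^4$ of them after preprocessing) is clustered in some round $\ell$ with $c^-D(v)\le r^{\ell}\le c^+D(v)$ for absolute constants $0<c^-<c^+$; in particular the terminal that clusters $v$ is within distance $O(D(v))$ of $v$ at that moment. This follows from \lemmaref{lem:ExpConcentration} applied to the partial sums $R_j=\sum_{\ell'\le\ell}q_j^{\ell'}$, whose largest summand has mean $Dr^\ell=\Theta(r^\ell/\ln k)$: the upper tail rules out some $R_j$ reaching $D(v)$ before round $\log_r(c^-D(v))$, and the lower tail, applied to the terminal nearest $v$, forces $R_j$ to exceed $D(v)$ by round $\log_r(c^+D(v))$ (this also uses the standard fact that an unclustered $v$ whose nearest terminal already has radius $\ge D(v)$ must join that terminal). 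With $\delta=\nicefrac{1}{20}$ each failure probability is $k^{-\Omega(1/\delta)}\le k^{-5}$, so a union bound over all vertices and the ($\poly(k)$ many) rounds gives $\Pr[\mathcal{A}^c]=O(1/k)$.

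Next, for the fixed pair $t,t'$, I would greedily partition $P_{t,t'}$ into intervals $\mathcal{Q}$ with $|Q|=\Theta(D(Q)/\ln k)$, where $D(Q)=\min_{v\in Q}D(v)$; this is possible since the edge-subdivision step made every edge of $P_{t,t'}$ much shorter than $\tfrac{1}{\ln k}d_G(t,t')$ and $D$ is $1$-Lipschitz along $P_{t,t'}$. The key local property, via memorylessness of the exponential, is: if at step $(j,\ell)$ terminal $t_j$ clusters some $v\in Q$, then conditioned on the entire history up to that step the overshoot of $q_j^\ell$ beyond the threshold that first reaches $v$ is a fresh $\Exp(Dr^\ell)$; on $\mathcal{A}$ one has $Dr^\ell=\Theta(D(Q)/\ln k)$ while $\diam(Q)=|Q|\le c_QD(Q)/\ln k$, so for $c_Q$ small enough relative to $\delta,c^-$ this overshoot is $\ge|Q|$, hence all of $Q$ joins $V_j$ at step $(j,\ell)$, with probability at least $0.8$. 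For the routing bound, listing the clustering terminals of the vertices of $P_{t,t'}$ in path order and collapsing repetitions yields a walk $t=s_0,s_1,\dots,s_p=t'$ in $M$, whose weight, using the ``nearby terminal'' clause of $\mathcal{A}$, is at most $d_G(t,t')$ (from the surviving path edges) plus $\sum_{(j,\ell)}O(\log k)\cdot|Q_j^\ell|$, the sum ranging over steps $(j,\ell)$ at which $t_j$ clusters at least one vertex of $P_{t,t'}$, where $Q_j^\ell\in\mathcal{Q}$ is the interval of the $t_j$-closest such vertex (the round-trip $t_j$ makes has length $O(D(Q_j^\ell))=O(\log k)\,|Q_j^\ell|$). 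Setting $N_Q=|\{(j,\ell):Q_j^\ell=Q\}|$ and $f=\sum_{Q}|Q|N_Q$, we get $d_M(t,t')\le d_G(t,t')+O(\log k)\cdot f$ on $\mathcal{A}$; and the local property makes $N_Q$ stochastically dominated, even conditionally on the past, by a geometric-type variable, so $\mathbb{E}[N_Q]=O(1)$ and $\mathbb{E}[f]=O(\sum_Q|Q|)=O(d_G(t,t'))$.

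The crux is then to show $f=O(d_G(t,t'))$ with probability $1-O(1/k^3)$, since a union bound over the $\binom{k}{2}$ pairs together with $\Pr[\mathcal{A}^c]=O(1/k)$ proves the theorem; I expect this to be the main obstacle. A black-box Bernstein/Azuma bound on $f$ is too lossy: the largest single-interval charge $|Q|N_Q$ can be $\Theta(d_G(t,t'))$ (an interval with $D(Q)=\Theta(d_G(t,t'))$ and $N_Q=\Theta(\log k)$, an event of probability only $k^{-\Omega(1)}$), which caps such bounds at constant-order deviation probabilities and is precisely why the previous analysis pays an extra $\log k$. The plan is instead to analyze all intervals together: expose the $q_j^\ell$ in the natural order (round by round, terminal by terminal), write $f$ as $\mathbb{E}[f]$ plus an accumulation of charges each triggered by a ``bad-split'' event that, conditioned on the entire past, has probability at most $0.2$ by the local property, and group these charges by the scale of the interval they hit, so that within a scale the accumulated charge is a sum of conditionally-rare increments whose upward deviation is controlled with probability $k^{-\omega(1)}$ — losing only $\exp(-\Omega(\log k))$ — and finally union over the $O(\log k)$ relevant scales. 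The technical heart, and the reason this needs care, is exactly the strong dependence flagged in the overview: clustering or splitting one interval changes the cluster radii, hence the thresholds and overshoots, for neighbouring and even far-away intervals, so turning the statement ``each increment, conditioned on the past, has probability at most $0.2$'' into a genuine per-scale concentration bound requires a martingale-style argument that tracks these couplings rather than the raw $q_j^\ell$ values.
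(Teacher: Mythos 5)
Your overall architecture (covering-round bounds via \lemmaref{lem:ExpConcentration}, the interval partition with $|Q|=\Theta(D(Q)/\ln k)$, the memorylessness-based $0.8$ local covering property, and the reduction $d_M(t,t')\le d_G(t,t')+O(\log k)\cdot f$) matches the paper, but the proof has a genuine gap exactly where you flag ``the crux'': the high-probability bound $f=O(d_G(t,t'))$ with failure probability $k^{-3}$ is never actually proved. You only state a plan (expose the $q_j^\ell$ in order, group charges by interval scale, and run ``a martingale-style argument that tracks these couplings''), and that plan is neither carried out nor clearly workable as stated: the increments charged to a given interval are not merely conditionally-rare events, because a failure splits a slice into two, spawning further future charging opportunities on the same interval, so the per-interval count is the total progeny of a branching process rather than a geometric variable, and the cross-interval dependence you yourself identify is precisely the obstruction. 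The paper closes this by a different mechanism: it couples the charging process with $\varphi$ \emph{independent} per-interval coin-box branching processes (\claimref{clm:CoinsDominate}), stochastically dominates each box total $\tilde Y_Q$ by $1+\Exp(10)$ (\claimref{clm:ExpDominate}), and then applies \lemmaref{lem:ExpConcentration} to the weighted sum of independent exponentials, using $L^{+}(Q)\le O(\Delta/\ln k)$ to get the $k^{-3}$ tail. Without this coupling (or a fully worked-out substitute), the theorem's $1-O(1/k)$ guarantee does not follow; note also that a naive per-scale union over ``conditionally $0.2$-rare'' increments does not by itself beat the $\Theta(\log k)$-charge events of probability $k^{-\Omega(1)}$ that you correctly identify as the reason earlier analyses lose a $\log k$.

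A second, smaller but real, gap is in your global event $\mathcal{A}$. The ``standard fact that an unclustered $v$ whose nearest terminal already has radius $\ge D(v)$ must join that terminal'' is false: the ball of $t_v$ is grown in $G[V_\perp\cup V_{t_v}]$, and vertices on the shortest $t_v$--$v$ path may already belong to other clusters, making $d_{G[V_\perp\cup V_{t_v}]}(t_v,v)$ much larger than $D(v)$ (or infinite). The paper's \lemmaref{lem:vCoveredInTime} avoids this by tracking the \emph{effective} covered prefix $d_*$ of the path $P_v$ and showing, via memorylessness applied to whichever terminal covers the next path vertex, that $d_*$ grows each round by at least an $\Exp(D\cdot r^{\ell})$ amount; you need that argument (or an equivalent) for your clause (ii). Relatedly, clause (i) of $\mathcal{A}$ (every $R_j$ within a constant factor of its mean in \emph{every} round) cannot hold with probability $1-O(1/k)$ for small $\ell$, since then the max summand mean is comparable to the total mean; fortunately you only need the one-sided statements the paper isolates as $\ECUB$ and $\overline{\ECE}$, so this clause should be weakened accordingly.
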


\section{Covering properties}

We say that vertex $v$ is \emph{covered} if $v\in\cup_{j}V_{j}$. If $v$ joins $V_{j}$ at round $\ell$, we say that $v$ was covered by $t_{j}$ at round $\ell$. 
In this section we upper and lower bound the round in which each vertex is covered. This will imply that every vertex $v$ is covered by a terminal $t$ at distance at most $O(D(v))$. Furthermore, we will show that if vertices $v$ and $v'$ were covered by terminal $t$ at the same round, then $d_G(v,t)$ and $d_G(v',t)$ are asymptotically equal. 

We denote by $\ECUB$ (CUB for covering upper bound)
the event that every vertex $v$ was already covered after the $\left\lfloor \log_{r}\left(4 D(v)\right)\right\rfloor $
round.
\begin{lemma}
	\label{lem:vCoveredInTime} $\Pr\left[\ECUB\right]\ge1-\frac{1}{k}$.
\end{lemma}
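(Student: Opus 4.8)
The plan is to fix a vertex $v$ and bound the probability that $v$ is \emph{not} covered by the end of round $L_v:=\left\lfloor \log_r(4D(v))\right\rfloor$, then take a union bound over all $v$ (there are at most $k^4$ vertices after preprocessing, so we need each individual failure probability to be $O(k^{-5})$). The key observation is that to cover $v$ it suffices that \emph{some single} terminal $t$ — say the one realizing $D(v)=d_G(v,t)$ — grows its radius $R_t$ past $d_G(v,t)$ by the end of round $L_v$, \emph{provided} the relevant shortest path from $t$ to $v$ stays available; but actually the cleaner route is: if $R_t \ge D(v)$ at the end of round $L_v$ then $v$ has certainly been covered by \emph{some} terminal at or before that round (either $t$ reaches it through unclustered vertices, or $v$ was already grabbed earlier by someone else — in both cases $v$ is covered). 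So it is enough to lower-bound $R_t$.

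Next I would analyze $R_t$ at the end of round $L_v$. By construction $R_t = \sum_{\ell=0}^{L_v} q_t^\ell$ where $q_t^\ell \sim \Exp(D r^\ell)$ are independent, so $R_t = \sum_\ell q_t^\ell$ is a sum of independent exponentials with means $\lambda_\ell = D r^\ell$ and total mean $\mu = D\sum_{\ell=0}^{L_v} r^\ell = D\cdot\frac{r^{L_v+1}-1}{r-1} = r^{L_v+1}-1$ (using $D=r-1$). Since $r^{L_v}$ is within a factor $r$ of $4D(v)$, we get $\mu = \Theta(D(v))$, and more precisely $\mu \ge r\cdot 4D(v)/r - 1 \ge 4D(v)-1 \ge 2D(v)$ for $D(v)\ge 1$ once $k$ (hence the slack) is large enough — so $D(v) \le \mu/2$. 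Now apply the lower-tail bound of \lemmaref{lem:ExpConcentration} with $a = D(v) \le \mu/2$ and $\lambda_M = \max_\ell \lambda_\ell = D r^{L_v} \le D\cdot 4D(v) = \frac{\delta}{\ln k}\cdot 4D(v)$. This gives
\[
\Pr[R_t \le D(v)] \;\le\; \exp\!\left(-\frac{1}{\lambda_M}\Bigl(\tfrac{\mu}{2}-D(v)\Bigr)\right).
\]
With $\mu \ge 4D(v)-1$ we have $\tfrac\mu2 - D(v) \ge D(v) - \tfrac12 \ge \tfrac12 D(v)$, while $\lambda_M \le \frac{4\delta}{\ln k}D(v)$, so the exponent is at most $-\frac{\ln k}{8\delta} = -\frac{5}{2}\ln k$ (using $\delta = 1/20$), i.e. the failure probability for $v$ is at most $k^{-5/2}$. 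To get $O(k^{-5})$ one tightens the constants slightly — e.g. keep a sharper estimate of $\mu$ (it is essentially $4D(v)$, not just $2D(v)$, so $\tfrac\mu2 - D(v)\approx D(v)$) which already yields exponent $\le -\tfrac{\ln k}{4\delta} = -5\ln k$, giving $\Pr[\text{$v$ uncovered by round }L_v] \le k^{-5}$.

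Finally, union bound over the at most $k^4$ vertices of the preprocessed graph: $\Pr[\neg\ECUB] \le k^4 \cdot k^{-5} = 1/k$, which is exactly the claimed bound. The main obstacle I anticipate is the bookkeeping of constants — one has to be careful that $r^{L_v}\in[4D(v)/r,\,4D(v)]$ and that the resulting $\mu$ is genuinely close to $4D(v)$ (not merely $\Omega(D(v))$), because the union bound needs the exponent to beat $4\ln k$ with room to spare, and the concentration inequality's constants ($\tfrac12$ in the lower tail) are not generous. A secondary subtlety is justifying rigorously that "$R_t \ge D(v)$ at end of round $L_v$'' implies "$v$ covered by end of round $L_v$'': one argues by induction on rounds that at every point, every unclustered vertex within current $R_t$ of $t$ in $G[V_\perp\cup V_t]$ gets added — if $v$ is still unclustered, the shortest $t$–$v$ path in $G$ lies entirely in $V_\perp\cup V_t$ (any internal vertex already clustered elsewhere would be a contradiction with $v$ unclustered, since clusters only grow), so $v$ enters $V_t$ the moment $R_t$ exceeds $d_G(v,t)=D(v)$.
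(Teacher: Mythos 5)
The reduction at the heart of your proposal is false. You argue that if $R_t\ge D(v)$ at the end of round $L_v$ (where $t$ is the terminal realizing $D(v)$), then $v$ must be covered, and you justify this by asserting that while $v$ is unclustered, every internal vertex of the shortest $t$--$v$ path lies in $V_\perp\cup V_t$, ``since clusters only grow.'' That inference is wrong: an internal vertex $u$ of this path can be grabbed by a different terminal $t'$ whose radius reaches $u$ but stops short of $v$. This leaves $v$ unclustered while removing $u$ from $V_\perp\cup V_t$, so the ball $B_{G[V_\perp\cup V_t]}(t,R_t)$ may never reach $v$ no matter how large $R_t$ grows (e.g.\ when the route through $u$ is the only short one). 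Your two-case dichotomy (``$t$ reaches it through unclustered vertices, or $v$ was already grabbed by someone else'') omits precisely this third case: the path is blocked by a foreign cluster and $v$ is still uncovered, so a lower bound on $R_t$ alone says nothing about when $v$ is covered. This blocking phenomenon is the actual difficulty of the lemma, and it is what the paper's proof is built to handle: it tracks along $P_v$ the farthest covered vertex $u_*$, the terminal $t_*$ covering it (which need not be $t_v$), and the \emph{effective} covered length $d_*=d_G(t_v,u_*)+\bigl(R_{t_*}-d_{G[V_{t_*}]}(u_*,t_*)\bigr)$, and shows, using the memoryless property of the exponential distribution, that in every round $\ell$ the increase of $d_*$ is lower bounded by an $\Exp(D\cdot r^\ell)$ variable regardless of which terminal makes the progress. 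Only after this reduction does the sum-of-independent-exponentials bound of \lemmaref{lem:ExpConcentration} legitimately apply; your proposal applies it to a quantity ($R_t$ for a single fixed terminal) whose size does not control the event in question.

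A secondary, lesser issue is the bookkeeping you flag yourself: as computed, your tail exponent is about $\tfrac{5}{2}\ln k$ per vertex, which does not survive the union bound over $k^4$ vertices, and the claim that sharpening $\mu$ ``already yields'' exponent $5\ln k$ is asserted rather than carried out. This part is repairable (the paper's own computation, applied to the $d_*$ process, gives $k^{-5}$), but it would need to be done explicitly; the irreparable part of the proposal as written is the false covering reduction above.
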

\begin{proof}
	Fix some vertex $v$. We will show that the probability that $v$
	remains uncovered after the $m=\left\lfloor \log_{r}\left(4 D(v)\right)\right\rfloor $
	round is bounded by $k^{-5}$. Since there are at most $k^{4}$
	vertices, the lemma will follow by the union bound. Let $t_{v}$ be the
	closest terminal to $v$, and denote by $P_{v}$ the shortest path from
	$t_{v}$ to $v$ in $G$ (which has length $D(v)$). 
	Denote by $u_{*}$ the currently covered vertex farthest away from $t_{v}$ on $P_{v}$, by $t_{*}$ the terminal covering $u_{*}$, and by $R_{t_*}$ the radius currently associated with $t_*$. Set
	\[
		d_{*}=d_{G}(t_{v},u_{*})+\left(R_{t_{*}}-d_{G[V_{t_{*}}]}(u_{*},t_{*})\right)~.
	\]
	$d_{*}$ is the \emph{effective} covered part of $P_{v}$. Note that there might be no vertex at distance exactly $R_{t_*}$ from $t_{*}$ to cover. However, if we could add additional vertex at distance $d_*$ from $t_v$, it would be currently covered by $t_*$.
	See \figureref{fig:dStar} for illustration.
\begin{figure}[H]
	\centering{\includegraphics[scale=0.7]{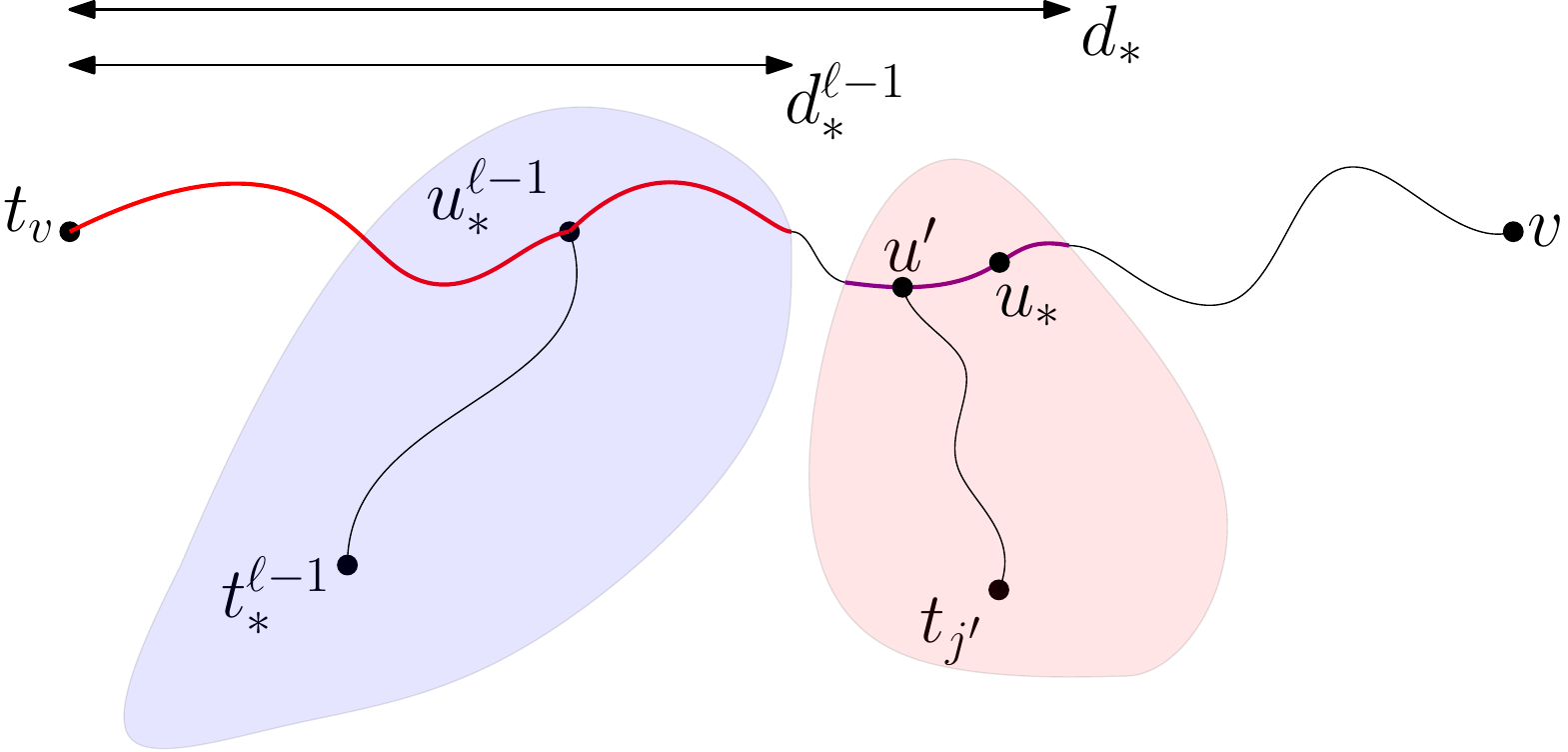}} 
	\caption{\label{fig:dStar}\small
		\it 
		At the end of the $\ell-1$, the farthest covered vertex on $P_v$ was $u_{*}^{\ell-1}$, who was covered by $t_{*}^{\ell-1}$. 
		$d_{*}^{\ell-1}$ is the length of the effective covered part of $P_v$, which is farther away from $u_{*}^{\ell-1}$ by  $R_{t_{*}^{\ell-1}}-d_{G[V_{t^{\ell-1}_*}]}(u^{\ell-1}_{*},t^{\ell-1}_{*})$ along $P_v$.
		At round $\ell$, either $t_{*}^{\ell-1}$ increase $R_{t_{*}^{\ell-1}}$ (and therefore $d_{*}$) according to $\Exp(D\cdot r^{\ell})$, or some new vertex $u'$ is covered by terminal $t_{j'}$ and then $d_*$ increase additionally according to distribution $\Exp(D\cdot r^{\ell})$. 
	}
\end{figure}

	Consider round $\ell$, we argue that the increase of $d_{*}$ during round $\ell$ is lower bounded by random variable distributed according to $\Exp(D\cdot r^{\ell})$.
	Let $u_{*}^{\ell-1}$ be $u_{*}$ by the end of the $\ell-1$ round, $t_{*}^{\ell-1}$ be the terminal covering $u_{*}^{\ell-1}$, and $d^{\ell-1}_{*}$ be the value of $d_{*}$ by the end of the $\ell-1$ round.
	Let $j$ be such that $t_j=t_{*}^{\ell-1}$. If by the $j$'s step of the $\ell$'s round, $u_{*}^{\ell-1}$ is still the farthest vertex covered on $P_v$ (that is $u_{*}=u_{*}^{\ell-1}$), then $d_{*}$ is growing by $q_j^\ell$ (exactly as $R_j$) which is  distributed according to $\Exp(D\cdot r^{\ell})$.
	Otherwise, let $u'$ be the first vertex on $P_v$ further than $u_{*}^{\ell-1}$ to be covered by terminal $t_{j'}$. 
	It holds that  
	\begin{align*}
	\ensuremath{d_{G[V_{\perp}\cup V_{j}]}(t_{_{*}}^{\ell-1},u')} & \le d_{G[V_{\perp}\cup V_{j}]}(t_{_{*}}^{\ell-1},u_{_{*}}^{\ell-1})+d_{G}(u_{_{*}}^{\ell-1},u')\\
	& =d_{G}(t_{v},u_{_{*}}^{\ell-1})+R_{t_{_{*}}^{\ell-1}}-d_{*}^{\ell-1}+d_{G}(u_{_{*}}^{\ell-1},u')=d_{G}(t_{v},u')+R_{t_{_{*}}^{\ell-1}}-d_{*}^{\ell-1}~.
	\end{align*}
	Therefore $d_{G}(t_{v},u')>d_{*}^{\ell-1}$, as otherwise $\ensuremath{d_{G[V_{\perp}\cup V_{j}]}(t_{_{*}}^{\ell-1},u')}\le  R_{t_{_{*}}^{\ell-1}}$, contradiction to the fact that $u'$ was not already covered by $t_{_{*}}^{\ell-1}$.
	By the memoryless property of exponential distribution, given that $t_{j'}$ covered $u'$, $R_{j'}$ and therefore $d_*$ will increase additively according to distribution $\Exp(D\cdot r^{\ell})$. 
	Note that $d_*$ never decreases. We conclude that until $d_*$ reaches $D(v)$, the growth
	of $d_{*}$ in round $\ell$ is lower bounded by a random variable distributed according to $\Exp(D\cdot r^{\ell})$.
	
	Let $X_{0},X_{1},\dots,X_{m}$ be independent random variables, where $X_{\ell}\sim\Exp(D\cdot r^{\ell})$, and $X=\sum_{\ell=0}^{m}X_\ell$. The probability that $v$ is not covered after $m$ rounds is lower bounded by the probability that $X<D(v)$. The mean of $X$ is
	\[
	\mu=\mathbb{E}\left[X\right]=\sum_{\ell=0}^{m}\mathbb{E}\left[X_{\ell}\right]=D\cdot\sum_{\ell=0}^{m}r^{\ell}=D\cdot\frac{r^{m+1}-1}{r-1}\ge r^{m+1}-1\ge\frac{r^{m+1}}{2}~.
	\]
	The maximal mean of $X_i$ is $\lambda_M=D\cdot r^m$. Note also that $r^{m+1}>r^{\log_{r}\left(4D(v)\right)}=4D(v)$, thus $D(v)<\frac{r^{m+1}}{4}$.
	By \lemmaref{lem:ExpConcentration} we conclude
	\begin{align*}
	\Pr\left[X\le D(v)\right] & \le\exp\left(-\frac{1}{\lambda_{M}}\left(\frac{\mu}{2}-D(v)\right)\right)\\
	& \le\exp\left(-\frac{1}{ D\cdot r^{m}}\left(\frac{r^{m+1}}{2}-\frac{r^{m+1}}{4}\right)\right)<\exp\left(-\frac{\ln k}{4\cdot\delta}\right)=k^{-5}~.
	\end{align*}
\end{proof}
	
	Set $\cCE=\frac13$ (CE for covered early). 
	We denote by $\ECE$ the event that for some vertex $v$ and terminal $t$, $t$ covered $v$ before the $\left\lfloor \log_{r}(\cCE\cdot d_{G}(v,t))\right\rfloor$ round.
\begin{lemma}
	\label{lem:coveredEarly}$\Pr\left[\ECE\right]\le k^{-3}$.
\end{lemma}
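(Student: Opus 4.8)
\textbf{Proof plan for \lemmaref{lem:coveredEarly}.}
The plan is to mirror the structure of the proof of \lemmaref{lem:vCoveredInTime}, but now I want an \emph{upper} bound on how fast a cluster can reach a vertex, rather than a lower bound. Fix a vertex $v$ and a terminal $t=t_j$, and consider the event that $t$ covers $v$ strictly before round $m' = \left\lfloor \log_r(\cCE\cdot d_G(v,t))\right\rfloor$. I will show this probability is at most $k^{-7}$ (say); since there are at most $k^4$ vertices and $k\le k^4$ terminals, a union bound over all $(v,t)$ pairs (at most $k^8$ of them, but we can be cruder and just use $k^4\cdot k^4$, or note distances only matter for $v$ within range) gives $\Pr[\ECE]\le k^{-3}$, with room to spare. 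Actually I should be a little careful with the union-bound count; a clean way is to union bound over the $\le k^4$ choices of $v$ and, for each $v$, only over the $\le k$ terminals, so $k^5$ events, and I need each to fail with probability $\le k^{-8}$.

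For a fixed pair $(v,t_j)$, the key observation is the opposite monotonicity: in order for $t_j$ to cover $v$ by the end of round $\ell$, the radius $R_j$ must have grown to at least $d_G(v,t_j)$ (since distances in the induced subgraphs $G[V_\perp\cup V_j]$ are only larger than $d_G$, covering $v$ requires $R_j\ge d_G(t_j,v)$). But $R_j = \sum_{\ell'=0}^{\ell} q_j^{\ell'}$ is exactly a sum of independent $\Exp(D\cdot r^{\ell'})$ variables, with no conditioning or coupling subtlety at all — this direction is \emph{cleaner} than \lemmaref{lem:vCoveredInTime} because $R_j$ is literally this sum, whereas there $d_*$ was only stochastically lower-bounded. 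So I set $Y = \sum_{\ell'=0}^{m'-1} q_j^{\ell'}$, a sum of independent exponentials, and bound $\Pr[\,t_j\text{ covers }v\text{ before round }m'\,]\le \Pr[Y\ge d_G(v,t_j)]$.

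Now I apply the upper-tail bound of \lemmaref{lem:ExpConcentration}. Here $\mu = \E[Y] = D\sum_{\ell'=0}^{m'-1} r^{\ell'} = D\cdot\frac{r^{m'}-1}{r-1} = r^{m'}-1$, and $\lambda_M = D\cdot r^{m'-1}$. From the definition of $m'$ we have $r^{m'}\le \cCE\cdot d_G(v,t) = \tfrac13 d_G(v,t)$, so $\mu \le \tfrac13 d_G(v,t) - 1 < \tfrac13 d_G(v,t)$, hence $d_G(v,t) > 3\mu \ge 2\mu$ and the first case of \lemmaref{lem:ExpConcentration} applies with $a = d_G(v,t)$. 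It gives $\Pr[Y\ge d_G(v,t)] \le \exp\!\left(-\frac{1}{2\lambda_M}(d_G(v,t)-2\mu)\right)$. Since $d_G(v,t)\ge 3\mu \ge 3(r^{m'}-1)$, we get $d_G(v,t)-2\mu \ge \tfrac13 d_G(v,t)\ge r^{m'}$ (using $d_G(v,t)\ge 3 r^{m'}/(\text{const})$ carefully — I'll keep constants loose), while $2\lambda_M = 2D r^{m'-1} = \frac{2D}{r}r^{m'}$. So the exponent is $\le -\frac{r}{2D} = -\frac{r\ln k}{2\delta}$, which with $\delta=\tfrac1{20}$ and $r>1$ is $\le -10\ln k$, giving $\Pr[Y\ge d_G(v,t)]\le k^{-10}\le k^{-8}$, comfortably enough for the union bound.

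\textbf{Main obstacle.} The genuine content is small: the only place to be careful is making sure $R_j\ge d_G(t_j,v)$ is truly \emph{necessary} for $t_j$ to cover $v$ — this uses that $V_j$ is always contained in $B_{G[V_\perp\cup V_j]}(t_j,R_j)$ and that $d_{G[V_\perp\cup V_j]}\ge d_G$, so a covered vertex $v\in V_j$ satisfies $d_G(t_j,v)\le d_{G[V_\perp\cup V_j]}(t_j,v)\le R_j$ at the time of covering — and then tracking the constants ($\cCE=\tfrac13$, $\delta=\tfrac1{20}$) through \lemmaref{lem:ExpConcentration} so that the per-pair failure probability beats the $k^5$ (or $k^8$) union-bound factor with margin. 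Everything else is bookkeeping; there is no coupling/memorylessness subtlety here because, unlike in the covering-upper-bound proof, $R_j$ is exactly the sum of the sampled exponentials.
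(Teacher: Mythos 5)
Your proposal is correct and follows essentially the same route as the paper: covering $v$ by $t$ forces $R_t \ge d_G(t,v)$ (induced-subgraph distances dominate $d_G$), the radius after a given round is exactly a sum of independent exponentials, and the upper tail of \lemmaref{lem:ExpConcentration} plus a union bound over at most $k^5$ vertex--terminal pairs gives $k^{-3}$. The only (harmless) deviation is an off-by-one: the paper bounds $R_t^m$ with the sum taken up to and including round $m=\left\lfloor\log_r(\cCE\cdot d_G(v,t))\right\rfloor$ — the reading of $\ECE$ that \corollaryref{cor:sameRoundSameDv} actually uses — whereas you sum only to $m'-1$; adding that one extra term only changes your exponent from $10\ln k$ to roughly $10\ln k-1$, so your slack still comfortably yields the needed $k^{-8}$ per pair.
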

\begin{proof}
	We denote by $\ECE_{v,t}$ the event that the vertex $v$  was covered by the terminal $t$ before the $\left\lfloor \log_{r}(\cCE\cdot d_{G}(v,t))\right\rfloor$ round. Note that $\ECE=\cup_{v,t}\ECE_{v,t}$. We will show that $\Pr\left[\ECE_{v,t}\right]\le k^{-8}$, and the lemma will follow by union bound.
	
	Fix some vertex $v$ and terminal $t$. Denote by $R_{t}^{\ell}$
	the value of $R_{t}$ after the $\ell$'th round. $\ECE_{v,t}$ might  occur only if $R_{t}^{m}$ is at least
	$d_{G}(t,v)$ for $m=\left\lfloor \log_{r}(\cCE\cdot d(t,v))\right\rfloor $.
	The growth of $R_{t}$ at round $\ell$ is according to  $\Exp(D\cdot r^{\ell})$, where all the rounds are independent. Hence $R_{t}^{m}\sim\sum_{\ell=0}^{m}\Exp(D\cdot r^{\ell})$. 
	It holds that $\mathbb{E}\left[R_{t}^{m}\right]=\sum_{\ell=0}^{m}D\cdot r^{\ell}=D\cdot\frac{r^{m+1}-1}{r-1}\le r^{m+1}$.
	By \lemmaref{lem:ExpConcentration}, we conclude
	\begin{align*}
	\Pr\left[\mathcal{E}_{v,t}^{\text{CE}}\right]\le\Pr\left[R_{t}^{m}\ge d_{G}(t,v)\right] & \le\exp\left(-\frac{1}{2\cdot D\cdot r^{m}}\left(d(t,v)-2\cdot\mathbb{E}\left[R_{t}^{m}\right]\right)\right)\\
	& \le\exp\left(-\frac{1}{2\cdot D\cdot r^{m}}\left(\frac{1}{\cCE}\cdot r^{m}-2\cdot r^{m+1}\right)\right)\\
	& =\exp\left(-\frac{\ln k}{2\delta}\left(\frac{1}{\cCE}-2r\right)\right)<\exp\left(-8\ln k\right)=k^{-8}~.
	\end{align*}
\end{proof}
\begin{corollary}
	\label{cor:sameRoundSameDv}Assuming $\ECUB$ and $\overline{\ECE}$,
	for every two vertices $v,v'$ who both were covered by terminal $t$
	at round $\ell$, it holds that 
	 $d_{G}(t,v')=O(D(v))$.
\end{corollary}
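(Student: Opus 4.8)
The plan is to combine the two covering bounds established in \lemmaref{lem:vCoveredInTime} and \lemmaref{lem:coveredEarly} so as to sandwich the common covering round $\ell$ between an upper bound coming from $v$ and a lower bound coming from $v'$, and then convert those round bounds into a distance bound using the geometric growth rate $r^{\ell}$.

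First I would invoke $\ECUB$ for the vertex $v$. Since $v$ is covered at round $\ell$, and $\ECUB$ guarantees that $v$ is already covered by the end of round $\lfloor\log_{r}(4D(v))\rfloor$, we must have $\ell\le\log_{r}(4D(v))$, i.e.\ $r^{\ell}\le 4D(v)$; this is the ``not too late'' direction. Next I would invoke $\overline{\ECE}$ for the pair $(v',t)$: since $t$ covers $v'$ precisely at round $\ell$, and $\overline{\ECE}$ forbids $t$ from covering $v'$ strictly before round $\lfloor\log_{r}(\cCE\cdot d_{G}(v',t))\rfloor$, we get $\ell\ge\lfloor\log_{r}(\cCE\cdot d_{G}(v',t))\rfloor>\log_{r}(\cCE\cdot d_{G}(v',t))-1$, hence $r^{\ell+1}>\cCE\cdot d_{G}(v',t)$, that is $d_{G}(v',t)<r^{\ell+1}/\cCE$. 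Combining the two bounds, $d_{G}(t,v')<\frac{r}{\cCE}\cdot r^{\ell}\le\frac{4r}{\cCE}\,D(v)=12\,r\,D(v)$, and since $r=1+\delta/\ln k<2$ for $k$ above a constant, $d_{G}(t,v')<24\,D(v)=O(D(v))$, as claimed.

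The only point requiring a little care — and it is genuinely minor — is the floor bookkeeping: one has to argue cleanly that ``$v'$ covered by $t$ at round $\ell$'' together with ``$t$ did not cover $v'$ before round $m$'' indeed forces $\ell\ge m$, and then absorb the additive $-1$ in the exponent into the constant via the bound $r<2$. I do not anticipate any real obstacle here; the corollary is essentially an arithmetic consequence of the two preceding lemmas.
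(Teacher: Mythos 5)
Your proposal is correct and follows essentially the same route as the paper: use $\ECUB$ to upper-bound the common round $\ell$ by $\log_{r}(4D(v))$, use $\overline{\ECE}$ to lower-bound $\ell$ in terms of $\log_{r}(\cCE\cdot d_{G}(v',t))$, and combine to get $d_{G}(t,v')\le O(D(v))$. Your handling of the floor (absorbing the extra factor $r<2$ into the constant, yielding $24\,D(v)$ instead of the paper's stated $12\,D(v)$) is if anything slightly more careful, and makes no difference to the $O(D(v))$ conclusion.
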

\begin{proof}
	As we assumed $\ECUB$, $v$ necessarily was covered
	until round $\left\lfloor \log_{r}\left(4 D(v)\right)\right\rfloor $,
	that is $\ell\le\left\lfloor \log_{r}\left(4 D(v)\right)\right\rfloor $.
	From the other hand, $\overline{\ECE}$ implies
	$\ell>\left\lfloor \log_{r}(\cCE\cdot d_{G}(v',t))\right\rfloor $.
	We conclude that $\log_{r}\left(\cCE\cdot d_{G}(v',t)\right)\le\log_{r}\left(4 D(v)\right)$,
	and therefore $d_{G}(t,v')<\frac{4}{\cCE}\cdot D(v)= 12\cdot D(v)$.
\end{proof}

\section{Clustering Analysis}
In this section we describe in detail the probabilistic process of growing clusters, and define a charging scheme that will be used to bound the distortion.

Consider two terminals
$t$ and $t'$. Let $P_{t,t'}=\left\{ t=v_{0},\dots,v_{L}=t'\right\} $
be the shortest path from $t$ to $t'$ in $G$.
We can assume that there are no terminals in $P_{t,t'}$ other than $t,t'$. This is because if we will prove that for every pair of terminals  $t,t'$ such that $P_{t,t'}\cap K=\{t,t'\}$ it holds that $d_{M}(t,t')\le O(\log k)\cdot d_{G}(t,t')$, the triangle inequality will imply this property for all pairs of terminals.

Set $P=\left\{ v_{1},\dots,v_{L-1}\right\} $
to be the path $P_{t,t'}$ without its boundaries $t,t'$.
For a sub interval
$Q=\left\{ v_{a},\dots,v_{b}\right\} \subseteq P$, the \emph{internal length} is 
$L(Q)=d_{G}(v_{a},v_{b})$, and the \emph{external length} is $ L^{+}(Q)=d_{G}(v_{a-1},v_{b+1})$.
Set $\cint=\frac{\cCE}{10}=\frac{1}{30}$ (``int'' for interval).
We partition the vertices in $P$ into sub intervals $\mathcal{Q}$, with the property that
each $Q\in\mathcal{Q}$ will contain a vertex $u_{Q}\in Q$ such that
$L(Q)\le\frac{\cint\delta}{\ln k}D(u_{Q})\le L^{+}(Q)$:
Such a partition could be constructed as follows. Sweep
along the interval $P$ in a greedy manner, after partitioning the prefix $v_{1},\dots,v_{h-1}$,
to construct the next $Q$, we set $u_{Q}=v_{h}$ and simply pick
the minimal index $s$ such that $ L^{+}(\left\{ v_{h},\dots,v_{h+s}\right\} )\ge\frac{\cint\delta}{\ln k}D(v_{h})$.
By the minimality of $s$, $L(\left\{ v_{h},\dots,v_{h+s}\right\} )\le L^{+}(\left\{ v_{h},\dots,v_{h+s-1}\right\} )\le\frac{\cint\delta}{\ln k}D(v_{h})$ (in the case $s=0$, trivially $L(\left\{ v_{h}\right\} )=0\le\frac{\cint\delta}{\ln k}D(v_{h})$).
Note that such $s$ always could be found, as $ L^{+}(\left\{ v_{h},\dots,v_{L-1}\right\} )=d_{G}(v_{h-1},t')>d_{G}(v_{h},t')\ge D(v_{h})$.

In the beginning of \algref{alg:mainSPR}, all the vertices of $P$ are
\emph{active}. 
Consider round $\ell$ in the algorithm when terminal $t_{j}$ grows
a ball to increase $V_{j}$. Specifically, it picks $q_{j}^{\ell}$
and sets $\ensuremath{R_{j}\leftarrow R_{j}+q_{j}^{\ell}}$ and
$V_{j}\leftarrow B_{G[V_{\perp}\cup V_{j}]}(t_{j},R_{j})$.
Suppose that at least one active vertex joins $V_{j}$. 
Let $a_{j}^{\ell}\in P_{t,t'}$
(resp., $b_{j}^{\ell}$) be the active vertex joining to $V_{j}$ with minimal (resp., maximal) 
index (w.r.t $P_{t.t'}$).
All the vertices
$\left\{ a_{j}^{\ell},\dots,b_{j}^{\ell}\right\}\subset P_{t,t'}$ with indices between $a_{j}^{\ell}$ to $ b_{j}^{\ell}$
become inactive. We call this set $\left\{ a_{j}^{\ell},\dots,b_{j}^{\ell}\right\}$ a
\emph{detour} $\mathcal{D}_{j}^{\ell}$ from $a_{j}^{\ell}$ to $b_{j}^{\ell}$.
See \figureref{fig:sliceChange} for illustration.
\begin{figure}[]
	\centering{\includegraphics[scale=0.7]{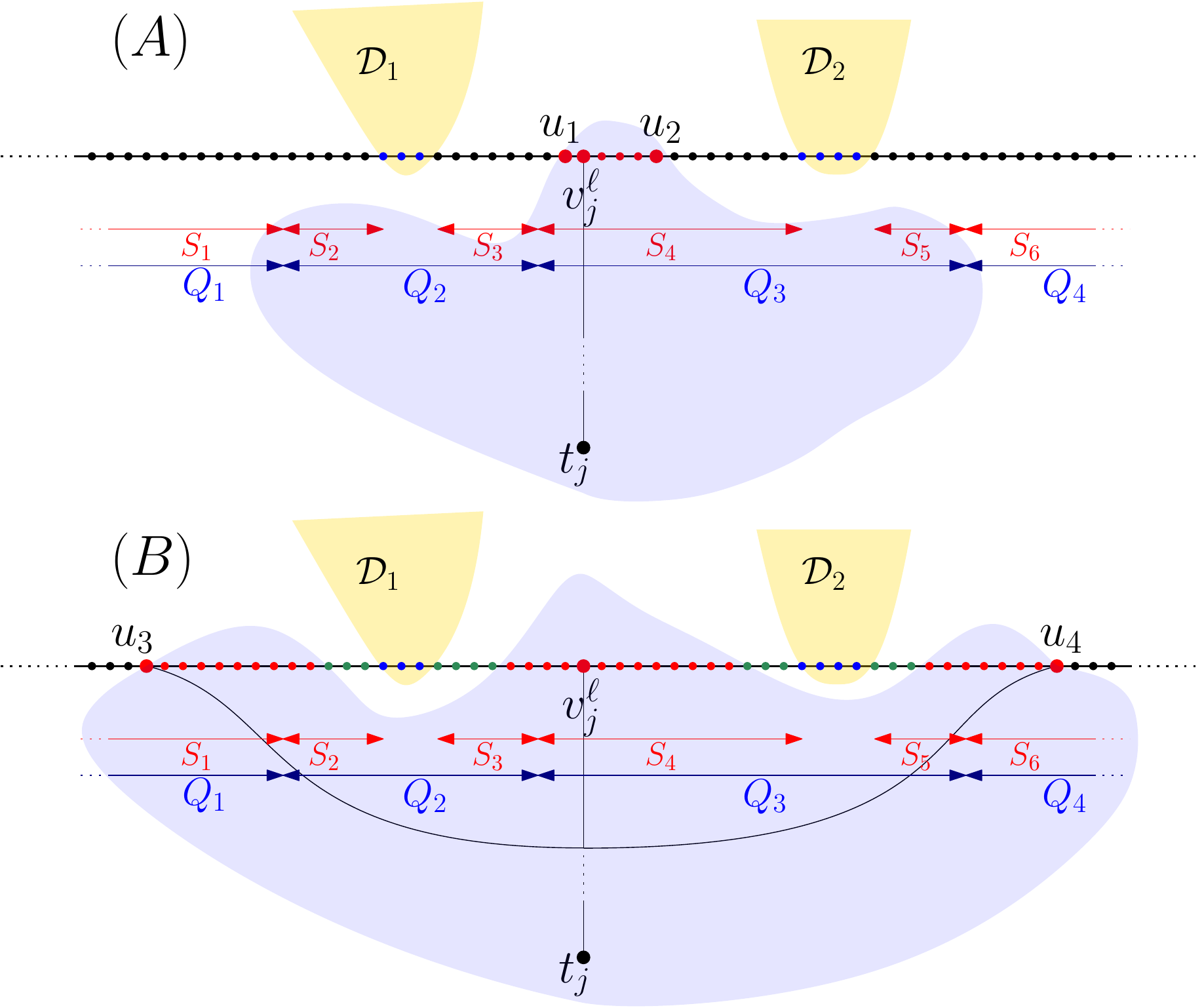}} 
	\caption{\label{fig:sliceChange}\small 
		\it 
		The figure illustrates round $\ell$ in \algref{alg:mainSPR}, when $t_j$ increases $V_j$.
		We present two scenarios for different choices of $q_j^\ell$.
		The black line is part of $P_{t,t}$ the shortest path from $t$ to $t'$. 
		The blue intervals $Q_i$ represent the subsets in $\mathcal{Q}$. The red sub-intervals $S_i$ are slices, maximal subsets of active vertices. Where $S_2,S_3\subset Q_2$ and $S_4,S_5\subset Q_3$.
		The yellow areas represent detours $\mathcal{D}_1$ and $\mathcal{D}_2$, where $Q_2$ (resp., $Q_3$) is charged for $\mathcal{D}_1$ (resp., $\mathcal{D}_2$). Note that vertices in that areas are inactive.\newline
		$t_j$ increases gradually $q_j^\ell$, the first vertex to be covered is $v_j^\ell$. In scenario (A), the growth of $q_j^\ell$ quickly terminates and sets $a_j^{\ell}=u_1$, $b_j^{\ell}=u_2$. While in scenario (B), the growth of $q_j^\ell$ continues longer, setting $a_j^{\ell}=u_3$, $b_j^{\ell}=u_4$.
		Points already inactive are colored in blue. 
		Points which currently covered by $t_j$, are colored in red.
		The green points, are points which still un-covered, but nevertheless become inactive. Points which remain active after the increase in $V_j$, are colored in black. \newline
		In scenario (A) all the vertices that become inactive, $\mathcal{D}_j^\ell$, included in $S_4$. 
		$Q_3$ is charged for $\mathcal{D}_j^\ell$. The number of slices in $Q_3$ is increased by $1$, and no other changes occur.
		In scenario (B) $\mathcal{D}_j^\ell$ contains all the vertices in $S_2,S_3,S_4,S_5$, and part of the vertices in $S_1,S_6$.
		The number of slices in $Q_2$ and $Q_3$ become $0$, while the number of slices in $Q_1$ and $Q_4$ remain unchanged.
		$Q_3$ is charged for $\mathcal{D}_j^\ell$, while its charge for $\mathcal{D}_2$ erased. Additionally, the charge of $Q_2$ for $\mathcal{D}_1$ is erased. That is, $Q_2$ will remain uncharged till the end of the algorithm ($\tilde{X}_{Q_2}=0$).
	}
\end{figure}

Within each interval $Q$, each maximal sub-interval of active
vertices is called a \emph{slice}. We denote by $\mathcal{S}(Q)$ the current
number of slices in $Q$. In the beginning of the algorithm, for every
sub interval $Q$, $\mathcal{S}(Q)=1$, while at the end of the algorithm $\mathcal{S}(Q)=0$.

For an active vertex $v$, let $q_{v}$ be the minimal choice of $q_{j}^{\ell}$,
that will force $v$ to join $V_{j}$. Let $v_{j}^{\ell}$ be the
active vertex with minimal $q_{v_{j}^{\ell}}$ (braking ties arbitrarily).
Let $Q_{j}^{\ell}\in\mathcal{Q}$ be the interval containing $v_{j}^{\ell}$. 
Similarly, let $S_{j}^{\ell}$ be the slice containing $v_{j}^{\ell}$.
We \emph{charge} $Q_{j}^{\ell}$
for the detour $\mathcal{D}_{j}^{\ell}$. We denote by $X_{Q}$ the number of
detours the interval $Q$ is currently charged for. For every detour
$\mathcal{D}_{j'}^{\ell'}$ which is contained in $\mathcal{D}_{j}^{\ell}$ (that is $a_{j}^{\ell}<a_{j'}^{\ell'}<b_{j'}^{\ell'}<b_{j}^{\ell}$ w.r.t. the order induced by $P_{t,t'}$),
we erase the detour and its charge. That is, for every $Q'\ne Q_{j}^{\ell}$,
$X_{Q'}$ might only decrease, while $X_{Q_{j}^{\ell}}$ might increase
by at most $1$ (and can also decrease as a result of deleted detours).
We denote by $\tilde{X}_{Q}$ the size of $X_{Q}$ by the end of \algref{alg:mainSPR}.
\figureref{fig:sliceChange} illustrates a single step.

Next, we analyze the change in the number of slices when $t_j$ grows its cluster at round $\ell$.
If $q_{j}^{\ell}<q_{v_{j}^{\ell}}$, then no active vertex joins $V_{j}$
and therefore $X_{Q}$ and $\mathcal{S}(Q)$ stay unchanged, for all $Q\in\mathcal{Q}$. Otherwise, $q_{j}^{\ell}\ge q_{v_{j}^{\ell}}$,
a new detour will appear, and be charged upon $Q_{j}^{\ell}$.
All the slices $S$ which are contained in $\mathcal{D}_{j}^{\ell}$ are deleted.
Every slice $S$ that intersects $\mathcal{D}_{j}^{\ell}$ but not contained
in it will be replaced by one or two new slices. If
$\mathcal{D}_{j}^{\ell}\cap S \notin\{\mathcal{D}_{j}^{\ell},S\}$,
then $S$ is replaced by a single new sub-slice $S'$. The only possibility
for a slice to be replaced by two sub-slices is if $\mathcal{D}_{j}^{\ell}\subseteq S$,
and $\mathcal{D}_{j}^{\ell}$ does not contain the extreme vertices in $S$ (see \figureref{fig:sliceChange}, scenario (A)). This can
happen only at $S_{j}^{\ell}$. We conclude that for every $Q'\ne Q_{j}^{\ell}$,
$\mathcal{S}(Q')$ might only decrease, while $\mathcal{S}(Q_{j}^{\ell})$ might increase
by at most $1$.

Let $q_{S_{j}^{\ell}}$ be the minimal choice of $q_{j}^{\ell}$,
that will force all vertices in $S_{j}^{\ell}$ to become inactive.
If $q_{j}^{\ell}\ge q_{S_{j}^{\ell}}$, then $\mathcal{S}(Q_{j}^{\ell})$ will
decrease by at least one (\figureref{fig:sliceChange}, scenario (B)). We call such occasion a \emph{success}. Otherwise,
if $q_{j}^{\ell}<q_{S_{j}^{\ell}}$ then $\mathcal{S}(Q_{j}^{\ell})$ might
increase by at most one. We call such occasion a \emph{failure} (\figureref{fig:sliceChange}, scenario (A)). 
\begin{claim}\label{clm:failProb}
	Assuming $\ell\ge\log_{r}\left(\cCE\cdot d_{G}(v_{j}^{\ell},t_{j})\right)$
	and $q_{j}^{\ell}\ge q_{v_{j}^{\ell}}$, the 
	failure probability is bounded by $p=0.2$.
\end{claim}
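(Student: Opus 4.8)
The plan is to reduce a failure at this step to the event that, once the first active vertex $v_j^\ell$ has joined $V_j$, the remaining radius budget $q_j^\ell-q_{v_j^\ell}$ is smaller than the internal length $L(Q_j^\ell)$ of the interval hosting $v_j^\ell$, and then to exploit that this budget is $\Exp(D\cdot r^\ell)$-distributed while $L(Q_j^\ell)$ is tiny compared with $D\cdot r^\ell$ under the hypothesis on $\ell$. Recall that, by definition, a failure is exactly the event $q_j^\ell<q_{S_j^\ell}$, where $q_{S_j^\ell}$ is the least choice of $q_j^\ell$ that makes every vertex of the slice $S_j^\ell$ inactive. Hence it suffices to establish the deterministic bound $q_{S_j^\ell}\le q_{v_j^\ell}+L(Q_j^\ell)$ and then to estimate $\Pr\!\left[q_j^\ell<q_{v_j^\ell}+L(Q_j^\ell)\mid q_j^\ell\ge q_{v_j^\ell}\right]$.

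Proving that deterministic bound is the step I expect to be the crux. First I would observe that every vertex of $P_{t,t'}$ that is currently active is currently uncovered: had it been covered by some $t_{j'}$ at an earlier step, it would have been an active vertex joining $V_{j'}$ at that step, hence would lie between $a_{j'}^{\ell'}$ and $b_{j'}^{\ell'}$ along $P_{t,t'}$ and would have been made inactive then. Consequently, at the start of step $j$ of round $\ell$ the whole slice $S_j^\ell$ lies in $V_\perp$; moreover $S_j^\ell$ is a contiguous sub-path of $P_{t,t'}$ contained in $Q_j^\ell$, so for every $v'\in S_j^\ell$ the portion of $P_{t,t'}$ from $v_j^\ell$ to $v'$ stays inside $V_\perp\cup V_j$ and has length at most $L(Q_j^\ell)$. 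Therefore $d_{G[V_\perp\cup V_j]}(t_j,v')\le d_{G[V_\perp\cup V_j]}(t_j,v_j^\ell)+L(Q_j^\ell)$. Since setting $q_j^\ell=q_{v_j^\ell}$ makes $R_j$ equal to $d_{G[V_\perp\cup V_j]}(t_j,v_j^\ell)$ (and the induced graph $G[V_\perp\cup V_j]$ does not change during step $j$), setting $q_j^\ell=q_{v_j^\ell}+L(Q_j^\ell)$ makes $R_j\ge d_{G[V_\perp\cup V_j]}(t_j,v')$ for all $v'\in S_j^\ell$; in particular the two extreme vertices of $S_j^\ell$ join $V_j$, so the detour $\mathcal{D}_j^\ell$ spans all of $S_j^\ell$ and $S_j^\ell$ becomes inactive. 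This yields $q_{S_j^\ell}\le q_{v_j^\ell}+L(Q_j^\ell)$.

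Finally I would bound the conditional probability. The quantities $q_{v_j^\ell}$, $q_{S_j^\ell}$ and $L(Q_j^\ell)$ are all determined by the configuration at the start of step $j$ of round $\ell$, independently of the fresh sample $q_j^\ell\sim\Exp(D\cdot r^\ell)$, so by memorylessness, conditioned on $q_j^\ell\ge q_{v_j^\ell}$ the overshoot $q_j^\ell-q_{v_j^\ell}$ is again $\Exp(D\cdot r^\ell)$-distributed, and together with the previous paragraph
\[
\Pr\!\left[\text{failure}\mid q_j^\ell\ge q_{v_j^\ell}\right]\ \le\ \Pr\!\left[\Exp(D\cdot r^\ell)<L(Q_j^\ell)\right]\ =\ 1-e^{-L(Q_j^\ell)/(D\cdot r^\ell)}\,.
\]
To see the exponent is small, combine the partition property $L(Q_j^\ell)\le\frac{\cint\delta}{\ln k}D(u_{Q_j^\ell})$ with $D=\frac{\delta}{\ln k}$ to get $L(Q_j^\ell)/D\le\cint D(u_{Q_j^\ell})$; since $u_{Q_j^\ell},v_j^\ell\in Q_j^\ell$, the triangle inequality gives $D(u_{Q_j^\ell})\le d_{G}(u_{Q_j^\ell},t_j)\le L(Q_j^\ell)+d_{G}(v_j^\ell,t_j)\le\frac{\cint\delta}{\ln k}D(u_{Q_j^\ell})+d_{G}(v_j^\ell,t_j)$, hence $D(u_{Q_j^\ell})\le 2\,d_{G}(v_j^\ell,t_j)$ (the factor $\frac{1}{1-\cint\delta/\ln k}$ being at most $2$). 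The hypothesis $\ell\ge\log_{r}\!\left(\cCE\cdot d_{G}(v_j^\ell,t_j)\right)$ means $r^\ell\ge\cCE\cdot d_{G}(v_j^\ell,t_j)$, so $L(Q_j^\ell)/(D\cdot r^\ell)\le 2\cint\, d_{G}(v_j^\ell,t_j)/r^\ell\le 2\cint/\cCE=\tfrac15$, and therefore the failure probability is at most $1-e^{-1/5}<0.2=p$. The only genuinely delicate point is the structural claim in the second paragraph (active path-vertices are uncovered, so that the slice sits in $V_\perp$ and the intra-slice subpath is usable inside $G[V_\perp\cup V_j]$); everything else is memorylessness plus constant chasing.
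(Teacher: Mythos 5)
Your proposal is correct and follows essentially the same route as the paper's proof: establish the deterministic bound $q_{S_{j}^{\ell}}\le q_{v_{j}^{\ell}}+L(Q_{j}^{\ell})$ (using that the slice consists of active, hence uncovered, vertices of the shortest path $P_{t,t'}$), then apply memorylessness of the exponential and the interval property $L(Q_{j}^{\ell})\le\frac{\cint\delta}{\ln k}D(u_{Q_{j}^{\ell}})$ together with $r^{\ell}\ge\cCE\cdot d_{G}(v_{j}^{\ell},t_{j})$ to get $L(Q_{j}^{\ell})/(D\cdot r^{\ell})\le\frac{1}{5}$. The only differences are cosmetic: you bound $D(u_{Q_{j}^{\ell}})\le 2\,d_{G}(v_{j}^{\ell},t_{j})$ instead of the paper's symmetric lower bound on $D(v_{j}^{\ell})$, and you finish with $1-e^{-1/5}<0.2$ rather than $1-e^{-x}\le x$.
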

\begin{proof}
	Recall that there is a vertex $u_{Q_{j}^{\ell}}\in Q_{j}^{\ell}$
	such that $L(Q_{j}^{\ell})\le\frac{\cint\delta}{\ln k}D(u_{Q_{j}^{\ell}})$.
	In particular, by the triangle inequality 
	$D(v_{j}^{\ell})\ge D(u_{Q_{j}^{\ell}})-d_{G}\left(u_{Q_{j}^{\ell}},v_{j}^{\ell}\right)\ge\left(\frac{\ln k}{\cint\delta}-1\right)L(Q_{j}^{\ell})\ge\frac{\ln k}{2\cint\delta}\cdot L(Q_{j}^{\ell})$. It holds that 
	\begin{equation*}
	D\cdot r^{\ell}\ge D\cdot \cCE\cdot d_{G}(v_{j}^{\ell},t_{j})\ge D\cdot \cCE\cdot D(v_{j}^{\ell})\ge D\cdot \cCE\cdot\frac{\ln k}{2\cint\delta}\cdot L(Q_{j}^{\ell})=5\cdot L(Q_{j}^{\ell})~.
	\end{equation*}

	As $\mathcal{S}(Q_{j}^{\ell})\subseteq Q_{j}^{\ell}$, and all the vertices
	in $\mathcal{S}(Q_{j}^{\ell})$ are active, for every $u\in \mathcal{S}(Q_{j}^{\ell})$,
	$d_{G[V_{\perp}\cup V_{j}]}(v_{j}^{\ell},u)\le L(Q_{j}^{\ell})$ (we used here that $P_{t,t'}$ is a shortest path).
	Therefore, if $q_j^\ell\ge q_{v_j^\ell} + L(Q_{j}^{\ell})$, all the vertices in $S$ will be covered by $t_j$, and in particular become inactive. We conclude that  $q_{S_{j}^{\ell}}\le q_{v_j^\ell} + L(Q_{j}^{\ell})$ .
	Recall that $q_j^\ell$ is distributed according to $\Exp(D\cdot r^\ell)$. Using the memoryless property, we get:
	\begin{align*}
	\Pr\left[q_{j}^{\ell}<q_{S_{j}^{\ell}}\mid q_{j}^{\ell}\ge q_{v_{j}^{\ell}}\right] & =\Pr\left[q_{j}^{\ell}<q_{S_{j}^{\ell}}-q_{v_{j}^{\ell}}\right]=1-e^{-\left(q_{S_{j}^{\ell}}-q_{v_{j}^{\ell}}\right)/(D\cdot r^{\ell})}\\
	& \le\left(q_{S_{j}^{\ell}}-q_{v_{j}^{\ell}}\right)/(D\cdot r^{\ell})\le L(Q_{j}^{\ell})/(D\cdot r^{\ell})\le
	\frac{1}{5}=p~.
	\end{align*}
\end{proof}
\section{Bounding the Number of Failures}
Set $\varphi=|\mathcal{Q}|$. We define a \emph{cost function} $f:\mathbb{R}_+^{\varphi}\rightarrow\mathbb{R}_{+}$,
in the following way $f(x_{1},x_{2},\dots,x_{\varphi})=\sum_{i}x_{i}\cdot L^{+}(Q_{i})$.
Note that the cost function $f$ is linear and monotonically increasing
coordinate-wise.
In \sectionref{sec:DistBound} we show that the distance between $t$ to $t'$ in the minor graph $M$ can be bounded (roughly) by the total ``length'' of all the detours that the intervals were charged for by the end of \algref{alg:mainSPR}. Moreover, w.h.p. the ``length'' of detour $\mathcal{D}_j^\ell$ can be bounded by $L^{+}(Q_j^\ell)$. Thus  
$f\left(\tilde{X}_{Q_{1}},\dots,\tilde{X}_{Q_{\varphi}}\right)$ is an asymptotic bound on $d_M(t,t')$.
This section is devoted to proving the following lemma.
\begin{lemma}\label{lem:fbound}
	$\Pr\left[f\left(\tilde{X}_{Q_{1}},\dots,\tilde{X}_{Q_{\varphi}}\right)\ge43\cdot d_G(t,t') \right]\le 2\cdot k^{-3}$.
\end{lemma}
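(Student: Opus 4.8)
The plan is to track, for each interval $Q_i$, the quantity $\mathcal{S}(Q_i)$ (number of slices) together with the total charge $X_{Q_i}$, and to exhibit a potential/supermartingale that converts the per-step success probability $1-p \ge 0.8$ from \claimref{clm:failProb} into a global concentration bound on $f(\tilde X_{Q_1},\dots,\tilde X_{Q_\varphi})$. The starting observation is the bookkeeping established in the previous section: in any step where an active vertex joins $V_j$, the charge $X_{Q_j^\ell}$ increases by at most $1$ and all other $X_{Q'}$ only decrease; simultaneously $\mathcal{S}(Q_j^\ell)$ increases by at most $1$ on a failure and decreases by at least $1$ on a success, while all other $\mathcal{S}(Q')$ only decrease. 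Since every interval starts with one slice and ends with zero slices, the total number of slice-decreasing events is controlled by the number of slice-creating events, i.e.\ by the number of failures; and each charge placed on $Q_i$ corresponds to exactly one detour step, so $\tilde X_{Q_i}$ is at most the number of steps whose ``minimal active vertex'' $v_j^\ell$ landed in $Q_i$. The key reduction is therefore: $f(\tilde X_{Q_1},\dots,\tilde X_{Q_\varphi}) \le \sum_i (\text{failures charged to } Q_i)\cdot L^+(Q_i) + (\text{something telescoping})$, so it suffices to bound a weighted sum of failure counts.

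Next I would set up the concentration argument. Condition on the event $\ECUB \cap \overline{\ECE}$ (which by \lemmaref{lem:vCoveredInTime} and \lemmaref{lem:coveredEarly} fails with probability $O(k^{-3})$), so that whenever a detour is created at round $\ell$ with first-covered vertex $v_j^\ell$, we have $\ell \ge \log_r(\cCE\cdot d_G(v_j^\ell,t_j))$ and hence \claimref{clm:failProb} applies with failure probability $\le p = 0.2$, \emph{independently of the past} thanks to the memoryless property — the conditional failure probability bound holds given the entire history up to the point $q_j^\ell$ is drawn. This lets me dominate the process by a sequence of independent (or rather, conditionally-dominated) Bernoulli($p$) trials: I would define, for each interval $Q_i$, a stopped process counting failures charged to $Q_i$, and observe that the number of successes charged to $Q_i$ is at least the number of failures charged to $Q_i$ (up to $\pm 1$ for the initial slice), because each success kills a slice and slices in $Q_i$ can only be born via failures charged to $Q_i$ (or split off the single slice $S_j^\ell$). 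Summing the ``success implies a slice died'' accounting over all intervals and over the whole run, the total number of detours that survive to the end — weighted appropriately — is bounded by a geometric-type sum controlled by $p/(1-p)$. The cleanest route is: let $N_i$ be the number of charging steps with $Q_j^\ell = Q_i$; then $\tilde X_{Q_i} \le$ (failures in those steps) $+ O(1)$, and $\E[\text{failures}] \le p \cdot N_i$, and crucially $N_i$ itself is bounded because a success on $Q_i$ consumes its slice budget.

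Then comes the weighting. Recall $L^+(Q_i) \le \frac{\cint\delta}{\ln k}D(u_{Q_i})$-ish, and more to the point the intervals tile $P_{t,t'}$ so $\sum_i L^+(Q_i) \le 2\, d_G(t,t')$ roughly (each external length double-counts one edge). So $\E[f(\tilde X_{Q_1},\dots,\tilde X_{Q_\varphi})]$ is already $O(d_G(t,t'))$ if each $\E[\tilde X_{Q_i}] = O(1)$, which the success-vs-failure accounting gives. To upgrade from expectation to a high-probability bound with the explicit constant $43$ and failure probability $2k^{-3}$, I would apply a Chernoff/Azuma bound to the martingale whose increments are (failure indicator $-\,p$) at each charging step, using that there are at most $\poly(k)$ charging steps total (each creates a detour, detours are erased only by nesting, and the number of vertices is at most $k^4$), so a union-bounded tail of the form $\exp(-\Omega(\text{excess}))$ kicks in once the deviation exceeds a constant multiple of the expectation times $\log k$ — and since we only need to beat $k^{-3}$, a constant-factor slack above the $O(d_G(t,t'))$ mean suffices. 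Assembling: $f \le 43\, d_G(t,t')$ except on $\ECUB^c \cup \ECE \cup (\text{Chernoff tail})$, each of probability $\le k^{-3}$ up to constants, giving the claimed $2k^{-3}$.

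\textbf{Main obstacle.} The delicate point is that the charges on different intervals are \emph{strongly dependent} — a single large $q_j^\ell$ can erase many detours across many intervals at once (scenario (B) in \figureref{fig:sliceChange}) — so the per-interval counts $\tilde X_{Q_i}$ are not independent and not even individually concentrated in an obvious way. The trick, which the technical-overview paragraph hints at, is to \emph{not} analyze intervals separately but to run one global potential: something like $\Phi = \sum_i \mathcal{S}(Q_i)\cdot L^+(Q_i) + c\sum_i X_{Q_i}\cdot L^+(Q_i)$, show that in expectation each charging step decreases $\Phi$ (successes dominate failures by the $0.8$ vs $0.2$ margin, and the $L^+$ weights are consistent because both the created detour and the killed slices live in comparable intervals), and then concentrate $\Phi$ via its supermartingale structure. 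Getting the weights in $\Phi$ to line up so that a failure's $+L^+(Q_j^\ell)$ charge is outweighed by the expected $\ge 0.8$-probability success's slice-destruction — while handling the free erasure of nested detours, which only helps — is the crux, and making the constants come out to exactly $43$ is the routine-but-careful part I would defer.
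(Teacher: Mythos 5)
There is a genuine gap, and it is exactly at the step you defer: the concentration argument. Your reduction to failure counts is fine in spirit (slices in $Q_i$ start at $1$ and end at $0$, so the number of successes charged to $Q_i$ exceeds the failures charged to it by at most one, and the number of charging steps on $Q_i$ is at most $2F_i+1$), but the tool you invoke to convert this into $f\le 43\cdot d_G(t,t')$ with probability $1-2k^{-3}$ does not deliver. An Azuma/Chernoff bound on the martingale with increments (failure indicator $-p$), weighted by $L^{+}(Q_{j}^{\ell})$, has up to $k^{4}$ increments each of size up to $\max_i L^{+}(Q_i)=\Theta(\Delta/\ln k)$ where $\Delta=d_G(t,t')$; at deviation $\Theta(\Delta)$ this gives a tail of order $\exp\bigl(-\Omega(\ln^{2}k/k^{4})\bigr)$, which is vacuous, and a Freedman-type refinement needs control of the (random) number of charging steps per interval, which is precisely what is not yet established. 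Indeed your own hedge --- that the tail ``kicks in once the deviation exceeds a constant multiple of the expectation times $\log k$'' --- would only yield $f=O(\log k)\cdot\Delta$, reintroducing exactly the $\log k$ factor (Cheung's bound) that this lemma exists to remove. The paper closes this gap by a different mechanism: a coupling (\claimref{clm:CoinsDominate}) of the slice/charge process with $\varphi$ \emph{independent} boxes of biased coin tosses, so that $(\mathcal{S}(Q_i),X_{Q_i})$ is dominated coordinatewise by the box counters; within each box the total number of tosses is a stopped biased walk with a geometric tail, hence stochastically dominated by $1+\Exp(10)$ (\claimref{clm:ExpDominate}); and finally $\sum_i L^{+}(Q_i)W_i$ with independent $W_i\sim\Exp(10)$ is concentrated via \lemmaref{lem:ExpConcentration}, where the decisive point is that the \emph{maximal} mean $10\,L^{+}(Q_i)\le 10\cint\delta\Delta/\ln k$ makes a deviation of order $\Delta$ cost $k^{-3}$. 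Your alternative potential $\Phi=\sum_i\mathcal{S}(Q_i)L^{+}(Q_i)+c\sum_i X_{Q_i}L^{+}(Q_i)$ with supermartingale drift controls the expectation, but upgrading it to an exponential tail at a constant multiple of $\Delta$ runs into the same increment/variance problem; you explicitly leave that crux open.

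A second, smaller but real, issue is your conditioning on $\ECUB\cap\overline{\ECE}$ before applying \claimref{clm:failProb}. These are events about the entire run; conditioning on them changes the law of every $q_j^{\ell}$, so the memoryless property and the bound $p'\le p$ given the history are no longer available ``independently of the past'' as you assert. The paper avoids this by treating $\XCE$ dynamically: it couples the two processes only up to the stopping time at which $\ECE$ first occurs, then unwinds the coupling and pays the trivial bound $f(k^{4},\dots,k^{4})$ on that event, adding $\Pr[\ECE]\le k^{-3}$ separately (and $\ECUB$ is not needed for this lemma at all). Your version could be repaired with such a stopping-time argument, but as written it is not valid, and together with the missing concentration step the proposal does not yet prove the stated bound.
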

Using \claimref{clm:failProb}, one can show that for every $Q\in\mathcal{Q}$, $\mathbb{E}[\tilde{X}_Q]=O(1)$, and moreover, w.h.p. $\tilde{X}_Q=O(\log k)$. 
However, we use a concentration bound on all $\{\tilde{X}_{Q_1},\dots,\tilde{X}_{Q_\varphi}\}$ simultaneously in order to prove a stronger upper bound.

\subsection{Bounding by independent variables}
In our journey to bound  $f\left(\tilde{X}_{Q_{1}},\dots,\tilde{X}_{Q_{\varphi}}\right)$, the first step is replacing $\left(\tilde{X}_{Q_{1}},\dots,\tilde{X}_{Q_{\varphi}}\right)$ with independent variables. 
Consider the following process: we start with $\varphi$ \emph{boxes} $B_{Q_{1}},\dots,B_{Q_{\varphi}}$,
where the box $B_{Q}$ resembles the interval $Q\in\mathcal{Q}$. The
boxes will contain independent coins. Each coin has probability $p$
to get $0$ (failure), and $1-p$ to get $1$ (success). Coins can
be active and inactive. In the beginning, there is a single active
coin in each box $B_{Q}$. We toss the active coins in the boxes in
some arbitrary order. When tossing a coin from box $B_{Q}$, the tossed
coin becomes inactive. If we get $0$ we add two additional active coins
to the box $B_{Q}$. The process terminates when no active coins remain.
For a box $B_{Q}$, denote by $Z_{Q}$ the number of active coins,
by $Y_{Q}$ the number of inactive coins and by $\tilde{Y}_{Q}$ the
number of inactive coins at the end of the process. 
Let $\XCE$ be an indicator for the event $\ECE$  (recall that $\ECE$ is the event that some vertex $v$ was covered by some terminal $t$, before the $\left\lfloor \log_{r}(\cCE\cdot d_{G}(v,t))\right\rfloor$ round).
\begin{claim}
	\label{clm:CoinsDominate}For every $\alpha\in\mathbb{R}_{+}$,\\
	$$\Pr\left[f\left(\tilde{X}_{Q_{1}},\dots,\tilde{X}_{Q_{\varphi}}\right)\ge\alpha\right]\le\Pr\left[f\left(\tilde{Y}_{Q_{1}},\dots,\tilde{Y}_{Q_{\varphi}}\right)+\XCE\cdot f\left(k^{4},\dots,k^{4}\right)\ge\alpha\right]~.$$
\end{claim}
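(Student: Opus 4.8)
The plan is to couple the clustering process with the abstract coin process so that, conditioned on $\overline{\ECE}$, the detour-charge vector $(\tilde X_{Q_1},\dots,\tilde X_{Q_\varphi})$ is coordinate-wise dominated by $(\tilde Y_{Q_1},\dots,\tilde Y_{Q_\varphi})$; and on the bad event $\ECE$ we simply use the crude bound that every $\tilde X_{Q_i}$ is at most the number of detours ever created, which is at most the number of vertices, $k^4$, so that $f(\tilde X_{Q_1},\dots,\tilde X_{Q_\varphi})\le f(k^4,\dots,k^4)$ always. Since $f$ is monotone coordinate-wise, these two cases together give exactly the claimed inequality: on $\overline{\ECE}$ (where $\XCE=0$) we bound $f(\tilde X)\le f(\tilde Y)$, and on $\ECE$ (where $\XCE=1$) we bound $f(\tilde X)\le f(k^4,\dots,k^4)\le f(\tilde Y)+f(k^4,\dots,k^4)$ using $f\ge 0$.

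The heart of the argument is the coupling on $\overline{\ECE}$. I would match each step of the algorithm in which some active vertex joins a cluster (i.e.\ $q_j^\ell\ge q_{v_j^\ell}$) with one coin toss in box $B_{Q_j^\ell}$: declare the toss a \emph{success} exactly when the algorithmic step is a success in the sense of \claimref{clm:failProb} (the slice $S_j^\ell$ becomes inactive, $\mathcal S(Q_j^\ell)$ drops), and a \emph{failure} otherwise. By \claimref{clm:failProb}, on $\overline{\ECE}$ the condition $\ell\ge\log_r(\cCE\cdot d_G(v_j^\ell,t_j))$ holds (this is precisely what $\overline{\ECE}$ guarantees, via the definition of $\ECE$ applied to $v=v_j^\ell$, $t=t_j$), so the failure probability is at most $p=0.2$; hence each algorithmic toss can be coupled to a coin whose failure probability is exactly $p$ by exposing extra randomness. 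The structural bookkeeping established in the Clustering Analysis section is what makes the domination work: at each charged step, $X_{Q_j^\ell}$ increases by at most $1$ and all other $X_{Q'}$ only decrease (deleted nested detours), mirroring the coin process where tossing from $B_Q$ increases $Y_Q$ by $1$ and touches no other box; likewise $\mathcal S(Q_j^\ell)$ increases by at most $1$ on a failure and $\mathcal S(Q')$ only decreases for $Q'\neq Q_j^\ell$, mirroring the rule that a failed toss adds two active coins to $B_Q$ only. The key invariant to maintain throughout the coupling is that for every $Q$, the current number of active coins $Z_Q$ dominates the current number of slices $\mathcal S(Q)$ (both start at $1$), and simultaneously the inactive-coin count $Y_Q$ dominates the charge count $X_Q$; a success/failure step preserves the first invariant because a success decreases $\mathcal S(Q)$ by $\ge 1$ while decreasing $Z_Q$ by exactly $1$, and a failure increases $\mathcal S(Q)$ by $\le 1$ while increasing $Z_Q$ by $1$ (tossing consumes one active coin, failure adds two back). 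One subtlety: the algorithm terminates when all of $P$ is covered, but the coin process runs until no active coins remain; since $Z_Q\ge\mathcal S(Q)\ge 0$ throughout and $\mathcal S(Q)=0$ at algorithm's end, I let any leftover active coins be tossed out afterwards, which only increases each $\tilde Y_Q$, so domination $\tilde X_Q\le\tilde Y_Q$ is preserved.

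The main obstacle is making the coupling of the \emph{order} of operations rigorous: the algorithm processes terminals $t_1,\dots,t_k$ in a fixed order within each round, and which interval $Q_j^\ell$ gets charged at a given step depends on the history of all previous radius choices, so the sequence of "boxes touched" is itself a random, adaptively-chosen sequence. The coin process, however, is explicitly allowed to toss active coins "in some arbitrary order," so I would let that arbitrary order be dictated online by the algorithm — whenever the algorithm performs a charged step in box $B_{Q_j^\ell}$, the coupled coin process tosses one currently-active coin from $B_{Q_j^\ell}$ (one exists, since $Z_{Q_j^\ell}\ge\mathcal S(Q_j^\ell)\ge 1$ because $v_j^\ell$ lies in an active slice of $Q_j^\ell$). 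The conditional failure probability of the algorithmic step given the history is at most $p$ by \claimref{clm:failProb}, so by introducing an auxiliary independent uniform variable I can realize the coin's outcome $0$ whenever the algorithm fails, and with the remaining probability mass realize a $1$ — this makes "algorithmic success $\Rightarrow$ coin success" hold deterministically while the coin's marginal is the right Bernoulli$(1-p)$. The rest is just checking that the two invariants above are preserved step by step, which is routine given the structural lemmas already proved.
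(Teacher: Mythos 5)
Your proposal is correct and follows essentially the same route as the paper: an online coupling in which each charged algorithmic step triggers a toss from $B_{Q_j^\ell}$ whose failure probability is padded up to exactly $p$ via auxiliary randomness, the coordinate-wise invariant $(Z_Q,Y_Q)\ge(\mathcal{S}(Q),X_Q)$ maintained as long as $\XCE=0$, leftover active coins tossed after the algorithm terminates, and the crude bound $\tilde{X}_Q\le k^4$ invoked once $\ECE$ occurs. One wording slip: the property your construction actually gives (and is what the invariant needs) is ``coin success $\Rightarrow$ algorithmic success'' (equivalently, an algorithmic failure forces the coin to be $0$), not ``algorithmic success $\Rightarrow$ coin success.''
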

\begin{proof}
	We will treat $\XCE$ dynamically, such that $\XCE=0$ at the beginning of \algref{alg:mainSPR}, and becomes $1$ when some vertex $v$ is covered by terminal $t$ and round $\ell\le\left\lfloor \log_{r}(\cCE\cdot d_{G}(v,t))\right\rfloor$.
	The proof is done by coupling the two process of \algref{alg:mainSPR} and the coin tosses. We execute \algref{alg:mainSPR}, which implicitly induces slices and detour charges. Simultaneously, we will use \algref{alg:mainSPR} to toss coins.
	We will maintain the invariant that, as long as $\XCE=0$, $\left(Z_{Q_{1}},Y_{Q_{1}},\dots,Z_{Q_{\varphi}},Y_{Q_{\varphi}}\right)$
	is bigger than $\left(\mathcal{S}(Q_1),X_{Q_{1}},\dots,\mathcal{S}(Q_\varphi),X_{Q_{\varphi}}\right)$
	coordinate-wise. 
	In the beginning both of them are equal (to $(1,0,1,0,\dots,1,0)$).
	Consider round $\ell$, step $j$, when $t_{j}$ grows its cluster. If $q_{j}^{\ell}<q_{v_{j}^{\ell}}$
	then nothing happens, and the invariant holds. Else, $q_{j}^{\ell}\ge q_{v_{j}^{\ell}}$.
	If $\ell\le\left\lfloor \log_{r}\cCE\cdot d_{G}(v_j^\ell,t_j)\right\rfloor$, then $\XCE$ turn into $1$,	
	we unwind the coupling and continue each of the processes independently.   
	Otherwise, $\ell\ge\log_{r}\left(\cCE\cdot d_{G}(v_{j}^{\ell},t_{j})\right)$.
	We will make a coin toss from the $B_{Q_{j}^{\ell}}$ box. Let $p'$
	be the probability that $q_{j}^{\ell}<q_{S_{j}^{\ell}}$ (and thus
	$\mathcal{S}(Q_{j}^{\ell})$ might grow), recall that $p'\le p$ (\claimref{clm:failProb}). If indeed $q_{j}^{\ell}<q_{S_{j}^{\ell}}$,
	then the coin set to be $0$. Otherwise, if $q_{j}^{\ell}\ge q_{S_{j}^{\ell}}$,
	then with probability $\frac{p-p'}{1-p'}$ the coin is set to be $0$.
	Note that the probability of $0$ is exactly $p$. If the number of
	slices $\mathcal{S}(Q_{j}^{\ell})$ is increased by $1$, then the number of active
	coins $Z_{Q_{j}^{\ell}}$ increases by $1$ as well. The number of
	detours $X_{Q_{j}^{\ell}}$ charged upon $Q_{j}^{\ell}$ might increase
	by at most $1$, while the number of inactive coins $Y_{Q_{j}^{\ell}}$
	is necessarily increases by $1$. For every $Q'\ne Q_{j}^{\ell}$, $\mathcal{S}(Q')$
	and $X_{Q'}$ might only decrease, while $Z_{B_{Q'}}$ and $Y_{B_{Q'}}$
	stay unchanged. Therefore $\left(Z_{Q_{1}},Y_{Q_{1}},\dots,Z_{Q_{\varphi}},Y_{Q_{\varphi}}\right)$
	is at least $\left(\mathcal{S}(Q_1),X_{Q_{1}},\dots,\mathcal{S}(Q_\varphi),X_{Q_{\varphi}}\right)$
	coordinate-wise after the changes made at round $\ell$ step $j$ as well.
	
	At the end of the algorithm (when no slices are left), we might still
	have some active coins. In this case we will simply toss coins until no active coins remain. Note that by
	doing so $\left(Y_{Q_{1}},\dots,Y_{Q_{\varphi}}\right)$ can only
	grow. The marginal distribution on $\left(\tilde{Y}_{Q_{1}},\dots,\tilde{Y}_{Q_{\varphi}}\right)$
	is exactly identical to the original one. 
	
	We conclude:
	in the case $\XCE=0$, at the end of the \algref{alg:mainSPR}, the two process remain coupled and hence $\left(Y_{Q_{1}},\dots,Y_{Q_{\varphi}}\right)$ greater or equal than $\left(\tilde{X}_{Q_{1}},\dots,\tilde{X}_{Q_{\varphi}}\right)$ coordinate-wise. From this point on, $\left(Y_{Q_{1}},\dots,Y_{Q_{\varphi}}\right)$ can only grow. The claim follows as $f$ is monotone. 
	In the case where $\XCE=1$, the claim follows as $\tilde{X}_{Q}$ is smaller than $k^4$ for every $Q$ (as the number of vertices and therefore detours is bounded by $k^4$).
\end{proof}

\subsection{Replacing Coins with Exponential Random Variables}
Our next step is to replace each $Y_Q$ with exponential random variable. This is done in order to use concentration bounds.
Consider some box $B_{Q}$. Equivalent way to describe the probabilistic
process in $B_{Q}$ is the following. Take a single coin with failure
probability $p$, toss this coin until the number of successes exceeds
the number of failures. The total number of tosses is exactly $\tilde{Y}_{Q}$.
Note that $\tilde{Y}_{Q}$ is necessarily odd. Next we bound the probability
that $\tilde{Y}_{B}\ge2m+1$, for $m\ge1$. This is obviously upper
bounded by the probability that in a series of $2m$ tosses we had
at least $m$ failures (as otherwise the process would have stopped earlier,
in fact this true even for $2m-1$ tosses). Let $Z_{i}$ be an indicator
for a failure in the $i$'th toss. $Z=\sum_{i=1}^{2m}Z_{i}$. Note that
$\mathbb{E}\left[Z\right]=2m\cdot p$.
A bound on $Z$ follows by Chernoff inequality.
\begin{fact}[Chernoff inequality]\label{fact:Chernoff}
	Let $X_{1},\dots,X_{n}$ be i.i.d indicator variables each with probability
	$p$. Set $X=\sum_i X_{i}$ and $\mu=\mathbb{E}[X]=np$. Then for every
	$\delta\le2e-1$, $\Pr\left[X\ge(1+\delta)\right]\le\exp(-\mu\delta^{2}/4)$.
\end{fact}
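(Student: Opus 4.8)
The plan is to use the standard exponential-moment (Chernoff) argument. Since the statement is trivial for $\delta=0$, assume $\delta>0$ and fix a parameter $t>0$ to be chosen later. Applying Markov's inequality to the nonnegative random variable $e^{tX}$ gives $\Pr[X\ge(1+\delta)\mu]=\Pr[e^{tX}\ge e^{t(1+\delta)\mu}]\le e^{-t(1+\delta)\mu}\cdot\mathbb{E}[e^{tX}]$. By independence of the $X_i$, the moment generating function factors as $\mathbb{E}[e^{tX}]=\prod_{i=1}^{n}\mathbb{E}[e^{tX_i}]$, and for an indicator of mean $p$ one has $\mathbb{E}[e^{tX_i}]=1-p+pe^{t}=1+p(e^{t}-1)\le\exp\big(p(e^{t}-1)\big)$ using $1+x\le e^{x}$. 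Multiplying over $i\in[n]$ yields $\mathbb{E}[e^{tX}]\le\exp\big(\mu(e^{t}-1)\big)$, hence
\[
\Pr[X\ge(1+\delta)\mu]\le\exp\big(\mu(e^{t}-1)-t(1+\delta)\mu\big).
\]

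Next I would optimize over $t$: the exponent $e^{t}-1-t(1+\delta)$ is minimized at $t=\ln(1+\delta)$, which is legitimate since $\delta>0$ forces $t>0$. Substituting this value gives the classical bound
\[
\Pr[X\ge(1+\delta)\mu]\le\Big(\frac{e^{\delta}}{(1+\delta)^{1+\delta}}\Big)^{\mu}=\exp\big(-\mu\,h(\delta)\big),\qquad h(\delta):=(1+\delta)\ln(1+\delta)-\delta.
\]

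It then remains to show $h(\delta)\ge\delta^{2}/4$ for $\delta$ in the stated range, which yields $\Pr[X\ge(1+\delta)\mu]\le\exp(-\mu\delta^{2}/4)$ as claimed. This is a one-variable calculus check: set $g(\delta)=h(\delta)-\delta^{2}/4$, note $g(0)=0$ and $g'(\delta)=\ln(1+\delta)-\delta/2$ (so $g'(0)=0$), and observe that $g''(\delta)=\frac1{1+\delta}-\frac12$ is positive for $\delta<1$ and negative for $\delta>1$; hence $g'$ increases from $0$, stays positive up to its unique positive root, and verifying that $g$ is still nonnegative at the right end of the range closes the argument. This final verification is the only genuinely computational step — and the place where care is needed, since the inequality $h(\delta)\ge\delta^{2}/4$ is tightest near the endpoint of the interval — but it is elementary; in the paper's application one in fact only needs $\delta=\tfrac32$ (where $\mu=2mp$ and $(1+\delta)\mu=m$), comfortably inside the region where $h(\delta)\ge\delta^{2}/4$ holds, and there one may alternatively invoke the cruder estimate $h(\delta)\ge\frac{3\delta^{2}}{6+2\delta}$, valid for all $\delta\ge0$, which already gives $h(\delta)\ge\delta^{2}/4$ whenever $\delta\le3$.
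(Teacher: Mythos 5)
The paper does not prove this Fact at all --- it is quoted as a known form of the Chernoff bound and used once, with $p=0.2$ and $\delta=\tfrac{1}{2p}-1=\tfrac32$ --- so there is no internal proof to compare against. Your derivation is the standard exponential-moment argument (Markov on $e^{tX}$, factor the MGF, bound $1+p(e^t-1)\le e^{p(e^t-1)}$, optimize at $t=\ln(1+\delta)$), and that part is correct and is surely what the author has in mind.

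The one step that would actually fail is the deferred endpoint check. The inequality $h(\delta)=(1+\delta)\ln(1+\delta)-\delta\ge\delta^2/4$ is \emph{false} at the right end of the stated range: $h(2e-1)=2e\ln 2+1\approx 4.768$, while $(2e-1)^2/4\approx 4.921$, and indeed $h(\delta)<\delta^2/4$ already for $\delta\gtrsim 4.11$. So ``verifying that $g$ is still nonnegative at the right end of the range'' would come back negative; the Fact as stated, with constant $\tfrac14$ on the full range $\delta\le 2e-1$, is a folklore overstatement that circulates in the literature. Your instinct that this is ``the place where care is needed'' is exactly right, and your fallback is the correct repair: the Bennett-type bound $h(\delta)\ge\frac{3\delta^2}{6+2\delta}$ gives $h(\delta)\ge\delta^2/4$ for all $\delta\le 3$, which comfortably covers the paper's only invocation at $\delta=\tfrac32$ (where one can also just check $h(\tfrac32)\approx 0.791>0.5625=\delta^2/4$ directly). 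If you want a clean statement, either restrict the range to $\delta\le 3$ (or $\delta\le 4$), or keep $\delta\le 2e-1$ but weaken the exponent to $-\mu\delta^2/5$; either version is honest, provable by your argument, and suffices for the paper.
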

\begin{align*}
\Pr\left[\tilde{Y}_{B}\ge2m+1\right]\le\Pr\left[Z\ge m\right] & =\Pr\left[Z\ge\left(1+(\frac{1}{2p}-1)\right)\mathbb{E}[Z]\right]\\
& \le\exp\left(-2m\cdot p\cdot(\frac{1}{2p}-1)^{2}/4\right)=\exp\left(-\frac{9}{40}m\right)\le\exp\left(-\frac{1}{5}m\right)~.
\end{align*}
\sloppy We conclude that the distribution of $\tilde{Y}_{B}$ is dominated
by $1+\Exp\left(10\right)$ (as for $W\sim\Exp(10)$, $\Pr\left[1+W\ge2m+1\right]=\exp\left(-\frac m5\right)$). Let $W_1,W_2,\dots,W_\varphi$ be i.i.d. variables distributed according to $\Exp(10)$,
 since all the boxes are independent and $f$ is linear and monotone
coordinate-wise, we conclude:
\begin{claim}
	\label{clm:ExpDominate}For every 	$\alpha\in\mathbb{R}_{+}$, $\Pr\left[f\left(\tilde{Y}_{B_{1}},\dots,\tilde{Y}_{B_{\varphi}}\right)\ge\alpha\right]\le\Pr\left[f\left(1,\dots,1\right)+f\left(W_{1},\dots,W_{\varphi}\right)\ge\alpha\right]$.
\end{claim}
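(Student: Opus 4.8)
The plan is to upgrade the single-box stochastic-dominance fact proved in the paragraph preceding the claim (namely $\tilde Y_{B}\preceq 1+\Exp(10)$) into a \emph{joint} dominance of the vector $(\tilde Y_{B_1},\dots,\tilde Y_{B_\varphi})$ by $(1+W_1,\dots,1+W_\varphi)$, using that the boxes are independent, and then to push this through the linear, coordinate-wise monotone function $f$.

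First I would restate the per-box ingredient: for each $i\in[\varphi]$, the number of tosses $\tilde Y_{B_i}$ in box $B_i$ is stochastically dominated by $1+W_i$ with $W_i\sim\Exp(10)$ — this is exactly what the Chernoff computation just above the claim establishes, since $\Pr[\tilde Y_{B_i}\ge 2m+1]\le e^{-m/5}=\Pr[1+W_i\ge 2m+1]$ for $m\ge1$ (and both sides are $1$ for $m=0$, while $\tilde Y_{B_i}$ is supported on the odd integers). By Strassen's theorem — or concretely by the inverse-CDF (common-uniform) coupling — there is, for each $i$ \emph{separately}, a coupling of $\tilde Y_{B_i}$ with $1+W_i$ on a common probability space under which $\tilde Y_{B_i}\le 1+W_i$ almost surely.

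Next I would lift this to the joint statement. The coin tosses in distinct boxes use independent coins, so the coordinates of $(\tilde Y_{B_1},\dots,\tilde Y_{B_\varphi})$ are mutually independent; the $W_i$ are i.i.d.\ by construction. Hence the product of the $\varphi$ per-box couplings is a single coupling of the two $\varphi$-dimensional vectors under which $\tilde Y_{B_i}\le 1+W_i$ holds simultaneously for all $i$. Consequently, for every coordinate-wise nondecreasing $g:\R^\varphi\to\R$ we get $\E[g(\tilde Y_{B_1},\dots,\tilde Y_{B_\varphi})]\le\E[g(1+W_1,\dots,1+W_\varphi)]$, and in particular the same inequality for $\Pr$ of any upward-closed event. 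Applying this with $g(x_1,\dots,x_\varphi)=\1[\,f(x_1,\dots,x_\varphi)\ge\alpha\,]$ — which is nondecreasing because $f(x)=\sum_i x_i\,L^{+}(Q_i)$ has all coefficients $L^{+}(Q_i)\ge0$ — yields $\Pr[f(\tilde Y_{B_1},\dots,\tilde Y_{B_\varphi})\ge\alpha]\le\Pr[f(1+W_1,\dots,1+W_\varphi)\ge\alpha]$. Finally, linearity of $f$ gives $f(1+W_1,\dots,1+W_\varphi)=f(1,\dots,1)+f(W_1,\dots,W_\varphi)$, which is exactly the right-hand side of the claim.

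The only point that needs care is the passage from the $\varphi$ marginal couplings to one joint coupling: that step is valid precisely because of the independence of the boxes (inherited from the independent coins in the urn process) together with the independence of the $W_i$'s. Everything else — the monotonicity of the indicator event and the rewriting of $f(1+W_i)_i$ via linearity — is routine bookkeeping.
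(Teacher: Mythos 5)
Your proof is correct. Both your argument and the paper's rest on exactly the same two ingredients: the per-box tail bound $\Pr[\tilde Y_{B}\ge 2m+1]\le e^{-m/5}=\Pr[1+W\ge 2m+1]$ established just before the claim, and the independence of the boxes. The difference is purely in how the marginal dominance is lifted to the vector: you invoke Strassen's theorem (or the inverse-CDF construction) to build $\varphi$ almost-sure couplings $\tilde Y_{B_i}\le 1+W_i$ and take their product, then apply the monotone indicator $\1\left[f(\cdot)\ge\alpha\right]$; the paper instead replaces one coordinate at a time, conditioning on $f_{\setminus\{i\}}$ of the remaining coordinates and using the marginal tail comparison directly inside an integral, telescoping over $i=1,\dots,\varphi$. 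The two mechanisms are interchangeable here. Your coupling route is arguably cleaner and makes the role of independence more transparent (it is exactly what licenses taking the product coupling); the paper's iterated-integration route avoids any appeal to coupling machinery and is entirely self-contained. One small point worth making explicit in your write-up, which you only gesture at: the marginal comparison must hold for \emph{all} real thresholds $x$, not just at the odd integers $2m+1$; this follows because $\tilde Y_{B_i}$ is supported on the odd integers, so $\Pr[\tilde Y_{B_i}\ge x]=\Pr[\tilde Y_{B_i}\ge 2m+1]\le e^{-m/5}\le\Pr[1+W_i\ge x]$ for $x\in(2m-1,2m+1]$, the survival function of $1+W_i$ being decreasing. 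With that observation spelled out, your argument is complete.
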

\begin{proof}
	Set $f_{\setminus\{s\}}(x_{1},\dots,x_{s-1},x_{s+1},\dots,x_{\varphi})=\sum_{i\in[\varphi]\setminus\left\{ s\right\} }x_{i}\cdot L^{+}(Q_{i})$.
	When integrating over the appropriate measure space, it holds that
	\begin{align*}
	\Pr\left[f\left(\tilde{Y}_{1},\dots,\tilde{Y}_{\varphi}\right)\ge\alpha\right] & =\int_{\beta}\Pr\left[f_{\setminus\{1\}}\left(\tilde{Y}_{2},\dots,\tilde{Y}_{\varphi}\right)=\beta\right]\cdot\Pr\left[\tilde{Y}_{1}\cdot L^{+}(Q_{1})\ge\alpha-\beta\right]d\beta\\
	& \le\int_{\beta}\Pr\left[f_{\setminus\{1\}}\left(\tilde{Y}_{2},\dots,\tilde{Y}_{\varphi}\right)=\beta\right]\cdot\Pr\left[1+W_{1}\ge\frac{\alpha-\beta}{L^{+}(Q_{1})}\right]d\beta\\
	& =\Pr\left[f\left(1+W_{1},\tilde{Y}_{2},\dots,\tilde{Y}_{\varphi}\right)\ge\alpha\right]\\
	& \le\Pr\left[f\left(1+W_{1},1+W_{2},\tilde{Y}_{3},\dots,\tilde{Y}_{\varphi}\right)\ge\alpha\right]\\
	& \le\cdots\le\Pr\left[f\left(1+W_{1},\dots,1+W_{\varphi}\right)\ge\alpha\right]\\
	& =\Pr\left[f\left(1,\dots,1\right)+f\left(W_{1},\dots,W_{\varphi}\right)\ge\alpha\right]~.
	\end{align*}
\end{proof}

\subsection{Concentration}

Set $\Delta=d_{G}(t,t')$. It holds that 
\begin{equation*}
\Delta\le\sum_{Q\in\mathcal{Q}}L^{+}(Q)\le2\Delta~,
\end{equation*}
as every edge in $P_{t,t'}$ is counted at least once, and at most
twice in this sum. In particular $f\left(1,\dots,1\right)\le2\Delta$.
Recall that every edge in $P_{t,t'}$ is of weight at most $\frac{c_{w}}{\ln k}\cdot d_{G}(t,t')$.
In particular, for every $Q\in\mathcal{Q}$, $L^{+}(\mathcal{Q})\le L(\mathcal{Q})+\frac{2c_{w}}{\ln k}\cdot\Delta$.
For every vertex $v$ on $P_{t,t'}$, it holds that $D(v)\le\min\left\{ d_{G}(v,t),d_{G}(v,t')\right\} \le\frac{\Delta}{2}$.
Therefore for every $Q\in\mathcal{Q}$, \[
L^{+}(\mathcal{Q})\le L(\mathcal{Q})+\frac{2c_{w}}{\ln k}\cdot\Delta\le\frac{\cint\delta}{\ln k}\cdot D(u_{Q})+\frac{2c_{w}}{\ln k}\cdot\Delta\le\frac{\cint\delta}{\ln k}\cdot\frac{\Delta}{2}+\frac{2c_{w}}{\ln k}\cdot\Delta=\frac{\cint\delta}{\ln k}\cdot\Delta~.
\]

Let $\tilde{W}_{Q}\sim L^{+}(Q)\cdot\Exp\left(10\right)$.
In particular, $\tilde{W}_{Q}\sim\Exp\left(10\cdot L^{+}(Q)\right)$.
Set $\tilde{W}=\sum_{Q\in\mathcal{Q}}\tilde{W}_{Q}$. Then $f\left(W_1,\dots,W_\varphi\right)$
is distributed exactly as $\tilde{W}$. The maximal mean among the $\tilde{W}_{Q}$'s
is $\lambda_{M}=\max_{Q\in\mathcal{Q}}10\cdot L^{+}(Q)\le10\cdot\frac{\cint\delta}{\ln k}\cdot\Delta$.
The mean of $\tilde{W}$ is $\mu=\sum_{Q\in\mathcal{Q}}10\cdot L^{+}(Q)\le20\Delta$.
Set $c_{\text{con}}=3\delta \cint=\frac{1}{200}$ (con for concentration). Using \lemmaref{lem:ExpConcentration},
we conclude 
\begin{align*}
\Pr\left[f\left(W_{1},\dots,W_{\varphi}\right)\ge(c_{\text{con}}+2)20\Delta\right] & =\Pr\left[\tilde{W}\ge(c_{\text{con}}+2)20\Delta\right]\\
& \le\exp\left(-\frac{1}{2\lambda_{M}}\left((c_{\text{con}}+2)20\Delta-2\mu\right)\right)\\
& \le\exp\left(-\frac12\cdot\frac{\ln k}{ 10\cint\delta\Delta}\cdot20c_{\text{con}}\Delta\right)=k^{-\frac{c_{\text{con}}}{\delta\cint}}=k^{-3}~.
\end{align*}
Using \claimref{clm:CoinsDominate}, \claimref{clm:ExpDominate} and \lemmaref{lem:coveredEarly},
we conclude
\begin{align*}
\Pr\left[f\left(\tilde{X}_{Q_{1}},\dots,\tilde{X}_{Q_{\varphi}}\right)\ge(20c_{\text{con}}+42)\Delta\right] & \le\Pr\left[f\left(\tilde{Y}_{Q_{1}},\dots,\tilde{Y}_{Q_{\varphi}}\right)+\XCE\cdot f\left(k^{4},\dots,k^{4}\right)\ge(20c_{\text{con}}+42)\Delta\right]\\
& \le\Pr\left[f\left(\tilde{Y}_{Q_{1}},\dots,\tilde{Y}_{Q_{\varphi}}\right)\ge(20c_{\text{con}}+42)\Delta\right]+\Pr\left[\ECE\right]\\
& \le\Pr\left[f\left(1+W_{1},\dots,1+W_{\varphi}\right)\ge(20c_{\text{con}}+42)\Delta\right]+\Pr\left[\ECE\right]\\
& \le\Pr\left[f\left(W_{1},\dots,W_{\varphi}\right)\ge(c_{\text{con}}+2)20\Delta\right]+\Pr\left[\ECE\right]\\
& \le k^{-3}+k^{-3}=2\cdot k^{-3}~.
\end{align*}
Note that $20c_{\text{con}}\le1$, thus \lemmaref{lem:fbound} follows.

\section{Bounding the Distortion}\label{sec:DistBound}
Denote by $\EfBig$ the event that for some pair of terminals $t,t'$, 
$f\left(\tilde{X}_{Q_{1}},\dots,\tilde{X}_{Q_{\varphi}}\right)\ge43 \cdot d_G(t,t')$ .\footnote{We abuse notation here and use the same $\tilde{X}_{Q_{1}},\dots,\tilde{X}_{Q_{\varphi}}$ for all terminals.} By \lemmaref{lem:fbound} and union bound, $\Pr\left[\EfBig\right]\le {k \choose 2}\cdot  k^{-3}< k^{-1}$.
\begin{lemma}\label{lem:distortion}
	Assuming $\overline{\ECE},\ECUB$ and $\overline{\EfBig}$,for every pair of terminals $t,t'$,\\ $d_{M}(t,t')\le O(\log k)\cdot d_{G}(t,t')$.
\end{lemma}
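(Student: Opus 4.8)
The plan is to build an explicit walk in $M$ from $t$ to $t'$ whose length is $O(\log k)\cdot d_G(t,t')$, by tracing along the shortest path $P_{t,t'}$ and ``jumping'' over each detour via the terminal that created it. Recall that after the algorithm terminates, every vertex of $P_{t,t'}$ lies in some detour $\mathcal{D}_j^\ell$ that was not subsequently erased; the surviving detours, ordered along $P_{t,t'}$, partition $P$ into consecutive blocks, and the interval $Q_j^\ell$ charged for $\mathcal{D}_j^\ell$ contributes $X_{Q_j^\ell}$ at the end, so $\sum_{\text{surviving }\mathcal{D}_j^\ell} 1 = \sum_Q \tilde X_Q$. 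For a surviving detour $\mathcal{D}_j^\ell$ from $a_j^\ell$ to $b_j^\ell$, both endpoints were covered by $t_j$ at round $\ell$, so by \corollaryref{cor:sameRoundSameDv} (using $\overline{\ECE}$ and $\ECUB$) we have $d_G(t_j,a_j^\ell), d_G(t_j,b_j^\ell) = O(D(v_j^\ell))$ where $v_j^\ell$ is the first-covered vertex, which lies in $Q_j^\ell$; hence $d_G(a_j^\ell,b_j^\ell)=O(D(u_{Q_j^\ell}))=O\big(\frac{\ln k}{\delta}L^+(Q_j^\ell)\big)$ by the defining property of the partition $\mathcal{Q}$ and the triangle inequality. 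The walk replaces the subpath of $P_{t,t'}$ across $\mathcal{D}_j^\ell$ by the two $M$-edges $\{t_j\text{-cluster}\}$ uses — more precisely, it goes from the terminal owning the block just before $a_j^\ell$, to $t_j$, to the terminal owning the block just after $b_j^\ell$ — and each such hop costs $O(D(u_{Q_j^\ell}))$, i.e. $O\big(\frac{\ln k}{\delta}\big)\cdot L^+(Q_j^\ell)$ in $M$.

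Summing over the surviving detours, the total walk length is bounded by a constant times $\sum_{\text{surviving}} \frac{\ln k}{\delta} L^+(Q_j^\ell) \le \frac{\ln k}{\delta}\sum_{Q\in\mathcal{Q}}\tilde X_Q\cdot L^+(Q) = \frac{\ln k}{\delta}\cdot f(\tilde X_{Q_1},\dots,\tilde X_{Q_\varphi})$, where the inequality uses that each surviving detour is charged to a distinct copy counted by some $\tilde X_Q$ (an interval charged $\tilde X_Q$ times accounts for at most $\tilde X_Q$ surviving detours whose first-covered vertex lies in it). Now invoke $\overline{\EfBig}$: $f(\tilde X_{Q_1},\dots,\tilde X_{Q_\varphi}) \le 43\, d_G(t,t')$. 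Therefore $d_M(t,t') \le O\big(\frac{\ln k}{\delta}\big)\cdot d_G(t,t') = O(\log k)\cdot d_G(t,t')$, as claimed, absorbing the constant $\delta=1/20$ and the constants hidden in the covering corollary.

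I should be slightly more careful about two bookkeeping points that the main body only sketches. First, I need that the surviving detours together with the singleton blocks between them really do cover $P_{t,t'}$ and that consecutive surviving detours are separated (or adjacent) along the path so that the ``jump from owner-terminal to $t_j$ to owner-terminal'' steps chain into a genuine walk in $M$ — this requires checking that the endpoints $a_j^\ell,b_j^\ell$ of a surviving detour lie in clusters that are $M$-adjacent to $t_j$'s cluster, which holds because $a_j^\ell$ (resp.\ $b_j^\ell$) joined $V_j$ and its path-neighbor outside the detour lies in some other cluster, creating an edge of $M$ of weight $\le d_G$ of the two terminals $= O(D(u_{Q_j^\ell}))$. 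Second, between two surviving detours there may be a short genuine subpath of $P_{t,t'}$ that was never part of any surviving detour — but every vertex of $P$ is inactivated exactly when it enters some detour, and erasure only removes detours strictly nested inside a larger one, so in fact every vertex of $P$ lies in a surviving detour and no such gaps exist. The main obstacle is making this combinatorial accounting of surviving detours versus $\sum_Q\tilde X_Q$ airtight — matching each surviving detour to the charge it induces without double-counting — but once that correspondence is pinned down, the rest is the chain of inequalities above.
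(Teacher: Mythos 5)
Your proposal is correct and takes essentially the same route as the paper's own proof: it decomposes $P$ into the surviving (pairwise disjoint, covering) detours, hops in $M$ between the terminals that created consecutive detours via the edge induced by the path edge $\{b_{j_i}^{\ell_i},a_{j_{i+1}}^{\ell_{i+1}}\}$, bounds each hop by $O(\log k)\cdot L^{+}(Q_{j_i}^{\ell_i})$ using \corollaryref{cor:sameRoundSameDv} together with the defining property of $u_{Q}$, and sums the charges via $f\left(\tilde{X}_{Q_{1}},\dots,\tilde{X}_{Q_{\varphi}}\right)\le 43\cdot d_G(t,t')$ from $\overline{\EfBig}$. The only cosmetic difference is that the paper accounts for the inter-detour path edges as a separate additive $d_G(t,t')$ term instead of folding them into the hop costs, which does not affect the bound.
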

\begin{proof}
Fix some $t,t'$. For every round $\ell$ and step $j$, the detour $\mathcal{D}_{j}^{\ell}$ was charged upon the interval $Q_{j}^{\ell}$.
In addition to the vertex $v_{j}^{\ell}$, $Q_{j}^{\ell}$ contains also a vertex $u_{Q_{j}^{\ell}}$
such that $D(u_{Q_{j}^{\ell}})\le\frac{\ln k}{\cint\delta}\cdot L^{+}(Q_{j}^{\ell})$.
By the triangle inequality, $D(v_{j}^{\ell})\le D(u_{Q_{j}^{\ell}})+L(Q_{j}^{\ell})= O\left(\ln k\right)\cdot L^{+}(Q_{j}^{\ell})$.
Using \corollaryref{cor:sameRoundSameDv}  the distances $d_{G}(a_{j}^{\ell},t_{j})$
and $d_{G}(b_{j}^{\ell},t_{j})$, between the terminal $t_j^\ell$ to the boundaries $a_{j}^{\ell},b_{j}^{\ell}$ of $\mathcal{D}_{j}^{\ell}$,
 are bounded by $O\left(D(v_{j}^{\ell})\right)=O\left(\ln k\right)\cdot L^{+}(Q_{j}^{\ell})$. 	

By the end of the \algref{alg:mainSPR}, all the vertices in $P=v_{1}\dots v_{L-1}$
are divided into consecutive detours $\mathcal{D}_{1},\dots,\mathcal{D}_{z}$. Detour
$\mathcal{D}_{i}$ was constructed at round $\ell_{i}$ by terminal $t_{j_{i}}$
and is from $a_{j_{i}}^{\ell_{i}}$ to $b_{j_{i}}^{\ell_{i}}$. In particular $\mathcal{D}_{i}$ was denoted $\mathcal{D}_{j_i}^{\ell_i}$ during the analysis.
See \figureref{fig:distortion} for illustration.
\begin{figure}[]
	\centering{\includegraphics[scale=0.8]{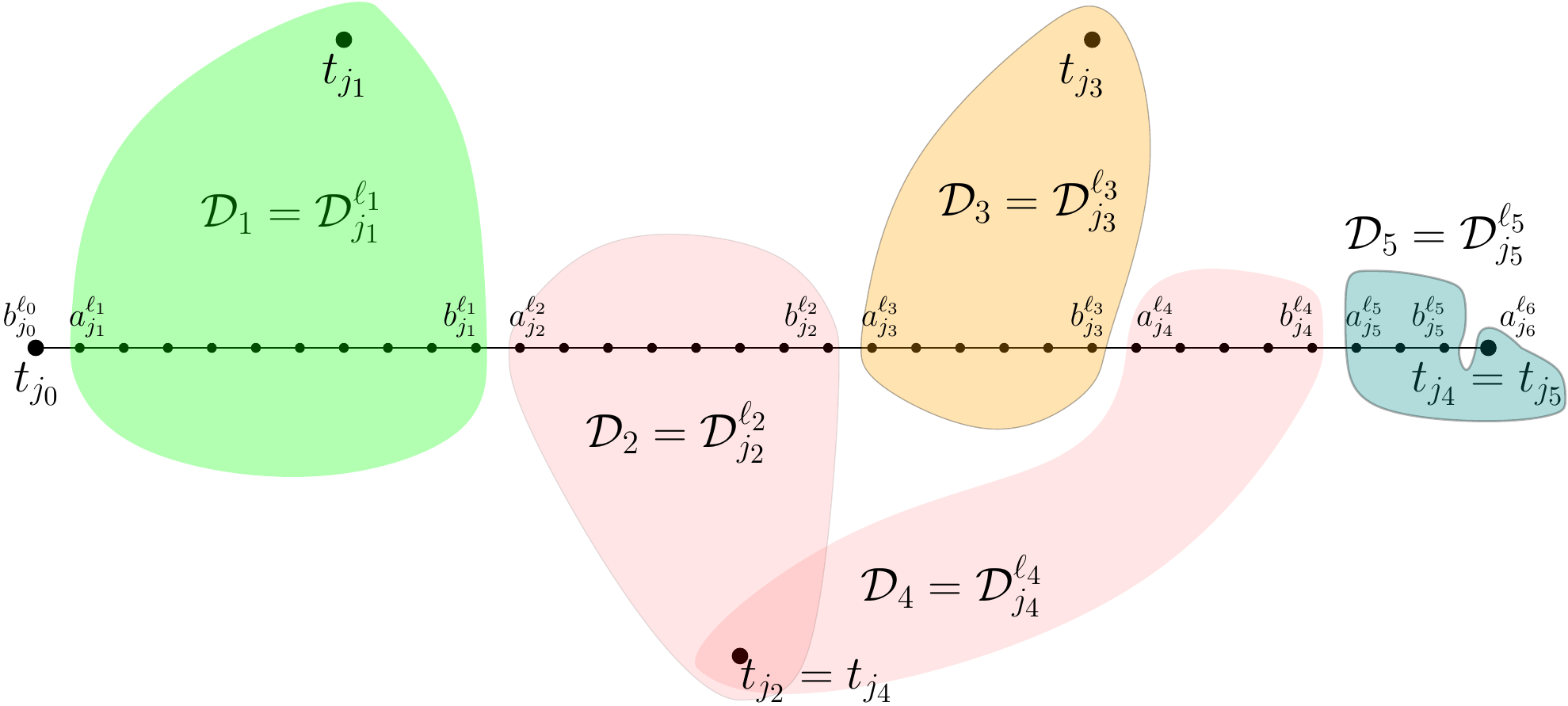}} 
	\caption{\label{fig:distortion}\small
		\it 
		The vertices $P=v_{1}\dots v_{L-1}$
		divided into consecutive detours $\mathcal{D}_{1},\dots,\mathcal{D}_{5}$. $t_{j_0},t_{j_1},t_{j_2},t_{j_3},t_{j_4},t_{j_5}$ is a walk in the terminal centered minor $M$ of $G$ (which induced by $V_1,\dots,V_k$). Note that this walk might not be a (shortest) path in $M$.
		The weight of the edge $\{t_{j_1},t_{j_2}\}$ in $M$ equals $d_G(t_{j_1},t_{j_2})$, which bounded by $d_{G}(t_{j_{1}},v_{b_{j_{1}}^{\ell_{1}}})+d_{G}(v_{b_{j_{1}}^{\ell_{1}}},v_{a_{j_{2}}^{\ell_{2}}})+d_{G}(v_{a_{j_{2}}^{\ell_{2}}},t_{j_{2}})$.
	}
\end{figure}

For every $i<z$, as $b_{j_{i}}^{\ell_{i}}\in V_{t_{j_{i}}}$,
$a_{j_{i+1}}^{\ell_{i+1}}\in V_{t_{j_{i+1}}}$ and $\left\{ b_{j_{i}}^{\ell_{i}},a_{j_{i+1}}^{\ell_{i+1}}\right\} $
is an edge in $G$, the minor graph $M$ contains an edge from $t_{j_{i}}$
to $t_{j_{i+1}}$. Set $t_{j_{0}}=t$, $t_{j_{z+1}}=t'$, $b_{j_{0}}^{\ell_{0}}=t$,
$a_{j_{z+1}}^{\ell_{z+1}}=t'$, and $ L^{+}(Q_{j_{0}}^{\ell_{0}})= L^{+}(Q_{j_{z+1}}^{\ell_{z+1}})=0$.
Note that  $\{t_{j_{0}},t_{j_{1}}\}$ and $\{t_{j_{z}},t_{j_{z+1}}\}$
are also edges in $M$.\footnote{We assume here that each terminal has an edge to itself of length
	$0$.}
We conclude
\begin{align*}
d_{M}(t,t')\le\sum_{i=0}^{z}d_{G}(t_{j_{i}},t_{j_{i+1}}) & \le\sum_{i=0}^{z}\left[d_{G}(t_{j_{i}},b_{j_{i}}^{\ell_{i}})+d_{G}(b_{j_{i}}^{\ell_{i}},a_{j_{i+1}}^{\ell_{i+1}})+d_{G}(a_{j_{i+1}}^{\ell_{i+1}},t_{j_{i+1}})\right]\\
& \le\sum_{i=0}^{z}d_{G}(b_{j_{i}}^{\ell_{i}},a_{j_{i+1}}^{\ell_{i+1}})+O\left(\ln k\right)\cdot\sum_{i=0}^{z}\left[L^{+}(Q_{j_{i}}^{\ell_{i}})+L^{+}(Q_{j_{i+1}}^{\ell_{i+1}})\right]\\
& \le\sum_{i=0}^{L-1}d_{G}(v_{i},v_{i+1})+O\left(\ln k\right)\cdot\sum_{Q\in\mathcal{Q}}2\tilde{X}_{Q}\cdot L^{+}(Q)\\
& =d_{G}(t,t')+O\left(\ln k\right)\cdot f\left(\tilde{X}_{Q_{1}},\dots,\tilde{X}_{Q_{\varphi}}\right)\\
& =O\left(\ln k\right)\cdot d_{G}(t,t')~.
\end{align*}
\end{proof}

By \lemmaref{lem:vCoveredInTime}, \lemmaref{lem:coveredEarly}, and \lemmaref{lem:fbound}, $\Pr\left[\overline{\ECE}\wedge\ECUB\wedge\overline{\EfBig}\right]\ge1-\left(\Pr\left[\ECE\right]+\Pr\left[\overline{\ECUB}\right]+\Pr\left[\EfBig\right]\right)\ge1-\frac{1}{k}-\frac{1}{k^{3}}-\frac{1}{k}>1-\frac{3}{k}$. Hence \lemmaref{lem:distortion} implies \theoremref{thm:mainSPR}.

\section{Acknowledgments}
The author would like to thank his advisors: to Ofer Neiman, for fruitful discussions, and to Robert Krauthgamer for useful comments.
\newpage
	{\small
		\bibliographystyle{alpha}
		\bibliography{SteinerBib}
	}
	
	\appendix
	
\newpage
\section{Proof of \lemmaref{lem:ExpConcentration}}\label{app:ConProof}
\LemmaExpCon*
\begin{proof}
	Set $t=\frac{1}{2\lambda_{M}}$. For each $X_{i}$, the moment generating function w.r.t $t$ equals  $\mathbb{E}\left[\exp tX_{i}\right]=\frac{1}{1-t\lambda_{i}}$.
	Using the inequality $\frac{1}{1-x}\le1+2x\le\exp(2x)$
	(for $0< x\le \frac12$) we have $\mathbb{E}\left[\exp tX_{i}\right]\le\exp\left(2t\lambda_{i}\right)$.
	Therefore,
	\begin{align*}
	\Pr\left[X>a\right] & =\Pr\left[\exp\left(tX\right)>\exp\left(ta\right)\right]\\
	& \le\mathbb{E}\left[\exp\left(tX\right)\right]/\exp(ta)\\
	& =\exp(-ta)\cdot\Pi_{i}\mathbb{E}\left[\exp tX_{i}\right]\\
	& \le\exp(-ta)\cdot\Pi_{i}\exp\left(2t\lambda_{i}\right)\\
	& =\exp(-ta+2t\mu)\\
	& =\exp\left(-\frac{1}{2\lambda_{M}}\left(a-2\mu\right)\right)~,
	\end{align*}
	where in the first inequality we use Markov inequality, and in the second
	equality we use the fact that $\left\{ X_{i}\right\} _{i}$ are independent.
	
	For the second inequality, set $t=\frac1{\lambda_M}$. It holds that $\mathbb{E}\left[\exp (-tX_{i})\right]=\frac{1}{1+t\lambda_{i}}$.
	Using the inequality $\frac{1}{1+x}\le1-\frac{x}{2}\le e^{-\frac{x}{2}}$
	(for $0< x\le 1$) we have $\mathbb{E}\left[e^{-tX_{i}}\right]\le e^{-\frac{t}{2}\lambda_{i}}$.
	Therefore,
	\begin{align*}
	\Pr\left[X<a\right] & =\Pr\left[\exp\left(-tX\right)>\exp\left(-ta\right)\right]\\
	& \le\mathbb{E}\left[\exp\left(-tX\right)\right]/\exp(-ta)\\
	& =\exp(ta)\cdot\Pi_{i}\mathbb{E}\left[\exp\left(-tX_{i}\right)\right]\\
	& \le\exp(ta)\cdot\Pi_{i}\exp\left(-\frac{t}{2}\lambda_{i}\right)\\
	& =\exp(ta-\frac{t}{2}\mu)\\
	& =\exp\left(-\frac{1}{\lambda_{M}}\left(\frac{\mu}{2}-a\right)\right)~.
	\end{align*}
\end{proof}
\newpage

\begin{multicols}{2}
\footnotesize 
\section{Index}\label{appendix:key}
\vspace{-3pt}
\subsubsection*{Preliminaries}
\begin{description}
	\item[$d_{G}$] : shortest path metric in $G$.
	\item[$B_{G}(v,r)$] : ball around $v$ in $d_G$ with radius $r$.
	\item[{$G[A]$}] : graph induced by $A$.
	\item[$K$] $=\{t_1,\dots,t_k\}$ : set of terminals.
	\item[$D(v)$] $=\min_{t\in K}d_{G}(v,t)$.
	\item[Terminal partition] : partition  $\{V_1,\dots,V_k\}$ of $V$, s.t. for every i, $t_i\in V_i$ and $V_i$ is connected.
	\item[Induced minor] : given terminal partition $\{V_1,\dots,V_k\}$, the induced minor obtained by contracting each $V_i$ into the super vertex $t_i$. The weight of the edge $\{t_i,t_j\}$ (if exist) set to be $d_G(t_i,t_j)$.
	\item[Distortion] of induced minor: $\max_{i,j}\frac{d_M(t_i,t_j)}{d_G(t_i,t_j)}$.
	\item[$\Exp(\lambda)$] : exponential distribution with mean $\lambda$.
\end{description}
\vspace{-3pt}
\subsubsection*{Assumptions}
\begin{itemize}
	\item Minimal distance between two terminals equals $1$.
	\item $k$ is larger then big enough constant.
	\item There are at most $k^{4}$ vertices in $G$.
	\item Every edge on $P_{t,t'}$ has weight at most $\frac{c_{w}}{\ln k}\cdot d_{G}(t,t')$.
	\item There are no terminals other then $t,t'$ on $P_{t,t'}$.	
\end{itemize}
\vspace{-3pt}
\subsubsection*{Constants}
\begin{description}
	\item[$\delta$] $=\frac{1}{20}$: governs the ratio $r$.
	\item[$r$] $=1+\frac{\delta}{\ln k}$: is the ratio in which the mean of the exponential  distribution grows in each round.
	\item[$D$] $=\frac{\delta}{\ln k}$: initial mean of the exponential distribution
	in round $0$.
	\item[$c_{w}$] $=\frac{\cint\delta}{4}=\frac{1}{2400}$: governs the maximum (relative) weight of
	an edge on $P_{t,t'}$.
	\item[$\cCE$]$=\frac{1}{3}$.
	The constant in $\ECE$.
	\item[$\cint$] $=\frac{\cCE}{10}=\frac{1}{30}$:
	governs the size of interval in the partition $\mathcal{Q}$ of $P$.
	\item[$\varphi$] $=|\mathcal{Q}|$: number of intervals in
	the partition $\mathcal{Q}$.
	\item[$p$] $=0.2$: upper bound on the failure probability.
\end{description}
\vspace{-3pt}
\subsubsection*{Events}
\begin{description}
	\item[$\ECUB$] : denotes that every vertex was already covered after the $\left\lfloor \log_{r}\left(4D(v)\right)\right\rfloor $
	round.
	\item[$\ECE$] : denotes that some vertex $v$ was covered by some terminal
	$t$, before the $\left\lfloor \log_{r}(\cCE\cdot d_{G}(v,t))\right\rfloor $
	round.
	\item[$\EfBig$] : denotes that for some pair of terminals $t,t'$, $\ensuremath{f\left(\tilde{X}_{Q_{1}},\dots,\tilde{X}_{Q_{\varphi}}\right)\ge43\cdot d_{G}(t,t')}$.
\end{description}

\subsubsection*{Notations}
\begin{description}
	\item[$V_{j}$] : cluster of $t_{j}$.
	\item[$R_{j}$] : radius of the cluster of $t_{j}$.
	\item[$q_{j}^{\ell}$] : growth of $R_{j}$ in the $\ell$ round.
	\item[$V_{\perp}$] : set of unclustered (uncovered) vertices.
	
	\item[$P_{t,t'}$] $=\left\{ t=v_{0},\dots,v_{L}=t'\right\} $: shortest
	path from $t$ to $t'$.
	\item[$P$] $=\left\{ v_{1},\dots,v_{L-1}\right\} $: $P_{t,t'}$ without its
	boundaries.
	\item[$L(\left\{ v_{a},v_{a+1},\dots,v_{b}\right\} )$] $=d_{G}(v_{a},v_{b})$:
	internal length.
	\item[$ L^{+}(\left\{ v_{a},v_{a+1},\dots,v_{b}\right\} )$] $=d_{G}(v_{a-1},v_{b+1})$:
	external length.
	\item[$\mathcal{Q}$] : partition of $P$ into intervals $Q$.
	\item[$u_{Q}$] : vertex in $Q\in\mathcal{Q}$ with the property $L(Q)\le\frac{\cint\delta}{\log k}D(u_{Q})\le L^{+}(Q)$.
	\item[$a_{j}^{\ell}$] : index of the leftmost active vertex covered by $t_{j}$
	at round $\ell$.
	\item[$b_{j}^{\ell}$] : index of the rightmost active vertex covered by $t_{j}$
	at round $\ell$.
	\item[$\mathcal{D}_{j}^{\ell}$] $=\left\{ v_{a_{j}^{\ell}},v_{a_{j}^{\ell}+1},\dots,v_{b_{j}^{\ell}}\right\} $:
	detour created by terminal $t_{j}$ at round $\ell$.
	\item[Slice]  maximal sub-interval (of some $Q$) of active vertices.
	\item[$q_{v}$] minimal choice of $q_{j}^{\ell}$, such that $t_{j}$
	will cover vertex $v$.
	\item[$v_{j}^{\ell}$] : vertex with the minimal $q_{v}$ (among active vertices).
	\item[$Q_{j}^{\ell}$] : interval containing $v_{j}^{\ell}$.
	\item[$S_{j}^{\ell}$] : slice containing $v_{j}^{\ell}$.
	\item[$q_{S_{j}^{\ell}}$] : minimal choice of $q_j^\ell$ that forces $t_j$ to cover all of  $S_{j}^{\ell}$.
	\item[$f(x_{1},x_{2},\dots,x_{\varphi})$] $=\sum_{i}x_{i}\cdot L^{+}(Q_{i})$:
	cost function.
	\item[$B_{Q}$] : a coin box which resembles the interval $Q$.
\end{description}
\vspace{-3pt}
\subsubsection*{Counters}
\begin{description}
	\item[$\mathcal{S}(Q)$] : (current) number of slices in interval $Q$.
	\item[$X_{Q}$] : number of detours the interval $Q$ is (currently) charged for.
	\item[$\tilde{X}_{Q}$] : number of detours the interval $Q$ is charged for
	by the end of \algref{alg:mainSPR}.
	\item[$Z_{Q}$] : number of active coins in $B_{Q}$. Each coin is active	when added to the box.
	\item[$Y_{Q}$] : number of inactive coins in $B_{Q}$. A coin become inactive after tossing.
	\item[$\tilde{Y}_{Q}$] : number of inactive coins in $B_{Q}$ by the end of the process.
\end{description}

\end{multicols}

\end{document}